\definecolor{Darkblue}{rgb}{0,0,0.4}
\definecolor{Brown}{cmyk}{0,0.81,1.,0.60}
\definecolor{Purple}{cmyk}{0.45,0.86,0,0}
\newcommand{\junk}[1]{}
\newcommand{\ignore}[1]{}
\newcommand{\R}[0]{{\ensuremath{\mathbb{R}}}}
\def\ceil#1{\lceil #1 \rceil}
\newcommand{\poly}{\operatorname{poly}}
\newcommand{\sse}{\subseteq}
\newcommand{\F}{{\mathbb{F}}}
\newcommand{\barF}{\overline{F}}
\renewcommand{\emptyset}{\varnothing}
\newcommand{\e}{\varepsilon}
\newcommand{\eps}{\varepsilon}
\newcommand{\agnote}[1]{}
\newcommand{\aknote}[1]{}
\newcommand{\jlnote}[1]{}
\newcommand{\elnote}[1]{}
\newcommand{\qedsymb}{\hfill{\rule{2mm}{2mm}}}
\newcommand{\initOneLiners}{%
    \setlength{\itemsep}{0pt}
    \setlength{\parsep }{0pt}
    \setlength{\topsep }{0pt}
}
\newcommand{\squishlist}{
 \begin{list}{$\bullet$}
  { \setlength{\itemsep}{0pt}
     \setlength{\parsep}{3pt}
     \setlength{\topsep}{3pt}
     \setlength{\partopsep}{0pt}
     \setlength{\leftmargin}{1.5em}
     \setlength{\labelwidth}{1em}
     \setlength{\labelsep}{0.5em} } }
\newcommand{\squishend}{
  \end{list}  }
\newcommand{\lcc}{\lceil\!\!\lceil}
\newcommand{\rcc}{\rceil\!\!\rceil}
\DeclarePairedDelimiterX{\infdivx}[2]{(}{)}{%
  #1\;\delimsize\|\;#2%
}
\newcommand{\open}{\small{\mathsf{open}}}
\newcommand{\cost}{\small{\mathsf{cost}}}
\newcommand{\impr}{\small{\mathsf{improv}}}
\newcommand{\flnumber}{{\alpha_{\mathsf{FL}}}\xspace}
\newcommand{\kmed}{\textsc{$k$-Median}\xspace}
\newcommand{\kmeans}{\textsc{$k$-Means}\xspace}
\newcommand{\matmed}{\textsc{Matroid Median}\xspace}
\newcommand{\fl}{\textsc{Facility Location}\xspace}
\newcommand{\mkc}{\textsc{Max $k$-Coverage}\xspace}
\newcommand{\lc}{\textsc{Label Cover}\xspace}
\newcommand{\OPT}{\mathsf{OPT}}
\newcommand{\kclique}{\textsc{$k$-Clique}\xspace}
\newcommand{\kbiclique}{\textsc{$k$-Biclique}\xspace}
\newcommand{\kdominatingset}{\textsc{$k$-Dominating Set}\xspace}
\newcommand{\E}{\mathbb{E}}
\newcommand{\cald}{\mathcal{D}}
\newcommand{\cali}{\mathcal{I}}
\newcommand{\call}{\mathcal{L}}
\newcommand{\calu}{\mathcal{U}}
\newcommand{\cals}{\mathcal{S}}
\newcommand{\f}{\frac}
\newcommand{\bn}{\binom}
\newcommand{\lds}{\ldots}
\newcommand{\s}{\subseteq}
\newcommand{\logn}{\log n}
\newcommand{\De}{\Delta}
\newcommand{\be}{\beta}
\newcommand{\el}{\ell}
\newcommand{\I}{{\cal I}}
\newcommand{\CoresetSize}{\ensuremath{O(\e^{-2} k \log n)}\xspace}
\title{{\bf Tight FPT Approximations for $k$-Median and
    $k$-Means}}
\author{Vincent Cohen-Addad}{Universit\'e Pierre et Marie Curie, Paris}{}{}{}{}
\author{Anupam Gupta}{Carnegie Mellon University}{}{}{Supported in part by NSF awards CCF-1536002, CCF-1540541, and CCF-1617790.}
\author{Amit Kumar}{IIT Delhi}{}{}{}
\author{Euiwoong Lee}{New York University}{}{}{Supported in part by the Simons Collaboration on Algorithms and Geometry.}
\author{Jason Li}{Carnegie Mellon University}{}{}{Supported in part by NSF awards CCF-1536002, CCF-1540541, and CCF-1617790.}
\authorrunning{V.\,Cohen-Addad, A.\,Gupta, A.\,Kumar, E.\,Lee, and J.\,Li}
\keywords{approximation algorithms, fixed-parameter tractability, k-median, k-means, clustering, core-sets}
\begin{document}

\newtheorem{conjecture}[theorem]{Conjecture}
\newtheorem{fact}[theorem]{Fact}
\newtheorem{observation}[theorem]{Observation}
\newtheorem{subclaim}[theorem]{Subclaim}
\newtheorem{construction}[theorem]{Construction}
\newtheorem{reduction}[theorem]{Reduction}
\newtheorem{invariant}{Invariant}
\newtheorem{extension}[theorem]{Extension}
\numberwithin{algorithm}{section}

\maketitle

\begin{abstract}
  We investigate the fine-grained complexity of approximating the
  classical \kmed/\kmeans clustering problems in general metric spaces. We show how to improve
  the approximation factors to $(1+2/e+\e)$ and $(1+8/e+\e)$ respectively, using algorithms that run in fixed-parameter
  time. Moreover, we show that we cannot do better in FPT time, modulo
  recent complexity-theoretic conjectures. 
\end{abstract}



\setcounter{page}{1}

\section{Introduction}

How well can we approximate the \kmed and \kmeans clustering problems?
This question has been intensively studied over the past two decades,
and many interesting algorithmic techniques have been developed and
refined in an attempt to understand these problems. Let us elaborate for
the \kmed problem; the story for \kmeans is much the same. Recall that
in the \kmed problem, given a metric space $(V,d)$ with $n$
points and clients at some of the points, the goal is to open $k$
\emph{facilities} such that the sum of distances from the clients to
their closest facilities is minimized.

The first constant-factor approximation algorithm for \kmed was given by
Charikar et al.~\cite{CharikarGTS02}. After many interesting
developments (e.g., primal-dual schemes, sophisticated LP rounding
schemes, and pseudo-approximations), today the best approximation
guarantee is 2.611~\cite{byrka14}. The best lower bound, however, is
still the $(1+2/e)$-hardness from 1998, due to Guha and
Khuller~\cite{GK98}. In this paper, we ask: can we do better if we give
ourselves more resources?  The problem can be solved exactly by
brute-force enumeration in time $n^{k+O(1)}$, but what can we do, say, in
FPT time $f(k) n^{O(1)}$?

We cannot hope to solve the problem exactly in FPT time: the reduction
of Guha and Khuller also shows a $W[2]$-hardness for finding the optimal
solution for \kmed/\kmeans exactly.  Naturally, we then ask what we can
achieve by combining the two approaches together, and whether good
approximation algorithms can be given in FPT time.




\medskip\textbf{Our Results.}
 Our main algorithmic result is a positive result in this direction:
\begin{theorem}[Algorithm for \kmed/\kmeans]
  \label{thm:main-alg}
  For every $\e > 0$, there is a $(1+2/e+\e)$-approximation algorithm
  for the \kmed problem, that runs in time FPT time, i.e., in
  $f(k,\e)n^{O(1)}$ time.  For the \kmeans problem, we can achieve a
  $(1+8/e+\e)$-approximation in the same runtime.
\end{theorem}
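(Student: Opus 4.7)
The plan is to show that the $(1+2/e)$ hardness of Guha and Khuller is essentially tight in FPT time by reducing \kmed to a weighted \mkc instance and then invoking the classical greedy $(1-1/e)$ guarantee. The strategy has three components: a coreset reduction, an FPT-time enumeration of the ``structure'' of the target clustering, and a submodular coverage step whose approximation translates directly into the claimed bound.

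First, I would preprocess using a standard coreset of size \CoresetSize, losing only a multiplicative $(1+\e)$ factor; this ensures all subsequent work is polynomial in $n$ and FPT in $(k,\e)$. Second, using a polynomial-time constant-factor approximation as a scaffold, I would enumerate, in time $f(k,\e)$, a candidate set $F$ that provably contains, for each optimal center $c^\star_i$, a proxy $\tilde c_i$ within distance at most $\e \cdot D^\star_i$, where $D^\star_i$ is the average client-to-center distance in the $i$-th optimal cluster. The enumeration would combine guessing, for each of the $k$ centers, its ``scale'' from a geometric sequence of distance levels, together with a local sampling step within balls of the corresponding radius around centers of the scaffold.

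Given this candidate set, I would cast the remaining problem as a weighted \mkc instance: each candidate $c \in F$ ``covers'' a client $j$ when $d(c,j) \leq (1+\e)\, D^\star_{\sigma(j)}$, where $\sigma(j)$ indexes the scale guessed for $j$'s cluster. The standard greedy algorithm then delivers a $(1-1/e)$-approximation to the coverable weight, and by a careful accounting the clients that the algorithm fails to cover contribute an extra cost of at most $(2/e + O(\e))\, \OPT$ when re-routed to their nearest opened center via the triangle inequality. Summing the covered and uncovered contributions gives the $(1+2/e+\e)$ bound for \kmed.

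For \kmeans, the identical pipeline applies, but the weak triangle inequality $d(x,z)^2 \le 2\, d(x,y)^2 + 2\, d(y,z)^2$ for squared distances inflates the re-routing cost by a factor of four, yielding the $(1+8/e+\e)$ bound. The main obstacle, I expect, is constructing the candidate set $F$ efficiently: one needs simultaneously that $|F|$ be small enough for FPT enumeration and that $F$ contain a sharp proxy for every optimal center, despite having no a priori knowledge of the optimal cluster sizes or radii. Controlling this tension, and ensuring that the resulting coverage instance truly yields a tight $(1+2/e)$ conversion rather than a looser bicriteria guarantee, is where most of the technical weight lies.
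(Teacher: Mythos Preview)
Your high-level architecture (coreset, FPT enumeration of cluster ``structure'', then a $(1-1/e)$-type submodular step) matches the paper, but the specific submodular formulation you propose has a genuine gap that prevents the tight constant from going through.

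The problem is the $0/1$ coverage criterion ``$c$ covers $j$ iff $d(c,j)\le(1+\e)D^\star_{\sigma(j)}$''. First, $\sigma(j)$ is the index of $j$'s \emph{optimal} cluster, which you do not know and cannot afford to guess per client. Second, and more fundamentally, with $D^\star_i$ defined as the \emph{average} connection distance in cluster~$i$, the optimal center $c^\star_i$ itself fails to cover every client in its own cluster (some clients are above average). Hence in your \mkc instance the optimum does not achieve full coverage, so the greedy $(1-1/e)$ guarantee is relative to a quantity strictly smaller than the total weight; the subsequent claim that uncovered clients contribute only $(2/e+O(\e))\OPT$ after re-routing is then unsupported. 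A threshold-based $0/1$ coverage loses the continuous cost structure needed to make the arithmetic land exactly on $1+2/e$.

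What the paper does instead is the step you are missing. Rather than enumerating average radii, it enumerates, for each $i$, the \emph{leader} $\ell_i$ (the client closest to $f^\star_i$, taken from the coreset) together with the rounded distance $R_i\approx d(\ell_i,f^\star_i)$; this defines candidate sets $F_i=\{f:\lcc d(f,\ell_i)\rcc=R_i\}$ and hence a partition matroid. The key idea is then to add a \emph{fictitious} facility $f'_i$ at distance $2R_i$ from every $f\in F_i$, so that the baseline $F'=\{f'_1,\dots,f'_k\}$ satisfies $\cost(C,F')\le(3+O(\e))\OPT$ (crucially because $R_i\le d(j,f^\star_i)$ for every $j$ in cluster~$i$, by the definition of leader). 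The function $\impr(S)=\cost(C,F')-\cost(C,F'\cup S)$ is monotone submodular, and maximizing it over the partition matroid with the $(1-1/e)$ algorithm yields
\[
\cost(C,S^\star)\;=\;\tfrac{1}{e}\cost(C,F')+\bigl(1-\tfrac{1}{e}\bigr)\cost(C,F^\star)\;\le\;\bigl(1+\tfrac{2}{e}+O(\e)\bigr)\OPT.
\]
For \kmeans the same construction gives $\cost(C,F')\le(3+O(\e))^2\OPT=(9+O(\e))\OPT$, and the identical arithmetic yields $1+8/e$; the factor is $9$ from squaring the detour length, not $4$ from the weak triangle inequality as you suggest. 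In short, the fictitious-facility baseline and the continuous $\impr$ function (not a thresholded coverage) are what convert the $(1-1/e)$ submodular guarantee into the exact $1+2/e$ and $1+8/e$ ratios.
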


The approximation guarantees in Theorem~\ref{thm:main-alg} match the
NP-hardness results for the two problems implied by~\cite{GK98}. However,
since we are allowing ourselves FPT time and not just $\poly(n,k)$ time,
can we do even better and go past this NP-hardness barrier? Our second main result shows
that this is not possible, at least under recent complexity-theoretic
conjectures. We prove that the results in Theorem~\ref{thm:main-alg} are
essentially tight, assuming the Gap-Exponential Time
Hypothesis~\cite{Dinur16,MR17,CCKLMNT17}:
\begin{restatable}[Hardness]{theorem}{kmedHardness}
  \label{thm:hard}
  There exists a function $g : \R^+ \to \R^+$ such that assuming the
  Gap-ETH, for any $\eps > 0$, any $(1 + 2/e - \eps)$-approximation
  algorithm for \kmed, and any $(1+8/e - \eps)$-approximation for
  \kmeans, must run in time at least $n^{k^{g(\eps)}}$.
\end{restatable}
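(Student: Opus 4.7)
The plan is to derive Theorem~\ref{thm:hard} by a gap-preserving reduction from \mkc, using the parameterized inapproximability of \mkc under Gap-ETH and adapting the classical Guha--Khuller reduction~\cite{GK98}. Because the reduction is polynomial-time and preserves the parameter $k$ (number of sets to pick $\mapsto$ number of facilities to open), the FPT runtime lower bound for \mkc will transfer verbatim to \kmed and \kmeans.

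The reduction itself is elementary. Given a \mkc instance with universe $U$ and sets $S_1,\ldots,S_m$, construct a metric with one facility $f_i$ per set $S_i$ and one client $c_u$ per element $u\in U$. Set $d(f_i,c_u)=1$ if $u\in S_i$ and $d(f_i,c_u)=3$ otherwise; set all facility--facility and client--client distances to $2$. A direct check shows this is a valid metric. In the completeness case, if $k$ sets cover all of $U$, then opening the corresponding $k$ facilities yields \kmed cost exactly $|U|$ and \kmeans cost $|U|$ (since $1^2=1$). In the soundness case, if every choice of $k$ sets covers at most $(1-1/e+\eps)|U|$ elements, then every choice of $k$ facilities leaves at least $(1/e-\eps)|U|$ clients at distance $3$, giving \kmed cost at least $(1+2/e-O(\eps))|U|$ and, since squared distances are $1$ and $9$, \kmeans cost at least $(1+8/e-O(\eps))|U|$. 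Rescaling $\eps$ absorbs the $O(\eps)$ slack and yields the exact thresholds claimed.

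To supply the FPT lower bound, I would invoke the parameterized inapproximability of \mkc implicit in~\cite{CCKLMNT17}: under Gap-ETH, for every $\eps>0$ there is a function $g$ such that no algorithm running in time $n^{k^{g(\eps)}}$ can distinguish \mkc instances of coverage $1$ from those of coverage at most $1-1/e+\eps$. The main obstacle, and the place that requires care, is matching the precise quantitative form of this lower bound to the theorem's statement, since one needs not merely ``no FPT algorithm'' but an explicit $n^{k^{g(\eps)}}$ bound; the PCP-based construction of~\cite{CCKLMNT17} indeed delivers this shape, and the reduction above preserves the parameter $k$ and the input size up to polynomial factors, so any $(1+2/e-\eps)$-approximation for \kmed (respectively $(1+8/e-\eps)$-approximation for \kmeans) in time $n^{k^{g(\eps)}}$ would immediately solve the promise \mkc problem in the same time, a contradiction.
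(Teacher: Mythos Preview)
Your approach is exactly the paper's: compose an FPT lower bound for \mkc with the Guha--Khuller reduction. Your description of the reduction (distances $1$ and $3$, squared distances $1$ and $9$) and the completeness/soundness arithmetic are correct and match the standard argument.

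The one real issue is your sourcing of the \mkc lower bound. You invoke it as ``implicit in~\cite{CCKLMNT17},'' but the paper explicitly argues that the prior results for \kdominatingset in~\cite{CCKLMNT17, CLM18} do \emph{not} directly yield the $(1-1/e+\eps)$-hardness for \mkc in the required quantitative form; the obstruction they cite is the need for the projection property in the underlying \lc instances. For this reason the paper states and proves the \mkc hardness separately as Theorem~\ref{thm:mkc-hardness}, via its own \lc construction (clause--variable game, merging into $\ell$ super-vertices, $r$ rounds of parallel repetition) followed by a Feige-style reduction tailored to the non-$V$-regular case. So your proof plan is structurally right, but the sentence ``the PCP-based construction of~\cite{CCKLMNT17} indeed delivers this shape'' is precisely the step the authors did not take for granted; you should either point to Theorem~\ref{thm:mkc-hardness} or reproduce the argument that the \mkc bound with the $n^{k^{g(\eps)}}$ running-time shape actually follows.
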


The basic component of the above hardness result is an FPT-hardness of a
factor of $(1-1/e)$ for the \mkc problem, again using the Gap-ETH
(Theorem~\ref{thm:mkc-hardness}). Composing that hardness result with
the reduction of Guha and Khuller~\cite{GK98} gives us
Theorem~\ref{thm:hard} above.

\medskip\textbf{Matroid Median.}
Finally, using our algorithmic techniques, we are able to also give an
improved approximation for the matroid-median problem, which is a
generalization of the \kmed problem. 
\begin{theorem}[Algorithm for \matmed]
  \label{thm:matroid-median}
  There is a $(2+\e)$-approximation algorithm for the \matmed  problem,
  that runs in time FPT time, i.e., in $f(k,\e)n^{O(1)}$ time.
\end{theorem}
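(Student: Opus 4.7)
The plan is to combine an FPT enumeration of per-cluster ``leaders'' with a polynomial-time matroid-intersection subroutine; the factor of $2$ arises from a triangle-inequality bridging of clients to facilities through their leaders.

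First, using the coreset/sampling machinery behind Theorem~\ref{thm:main-alg}, I would build a set $\Omega$ of size $f(k,\e)$ of candidate leader points with the property that, for the unknown optimum $F^*=\{f_1^*,\ldots,f_k^*\}$ inducing clusters $C_1^*,\ldots,C_k^*$ of costs $\OPT_i=\sum_{j\in C_i^*}d(j,f_i^*)$, some $k$-tuple $(l_1,\ldots,l_k)\in\Omega^k$ satisfies $d(l_i,f_i^*)\le(1+\e)\,\OPT_i/|C_i^*|$ for every $i$ --- automatic if the coreset contains a client achieving the average OPT-distance in each cluster. I would enumerate all such tuples, together with $(1+\e)$-approximate guesses $\tilde n_i$ of the sizes $|C_i^*|$; the total number of guesses is $f(k,\e)(\log n)^k$, still FPT.

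Second, for each fixed guess $(L,\{\tilde n_i\})$ I would solve in polynomial time a minimum-weight matroid-intersection problem: find $\{f_1,\ldots,f_k\}$ independent in $\mathcal{M}$, matched bijectively to $(l_1,\ldots,l_k)$, minimizing $\sum_i \tilde n_i\,d(l_i,f_i)$. The second matroid is the partition matroid enforcing one facility per leader position. The algorithm returns the candidate $F$ of smallest true serving cost $\sum_j d(j,F)$ over all guesses. For the correct guess, the witness $f_i=f_i^*$ is feasible with intersection value at most $\sum_i\tilde n_i\cdot(1+\e)\OPT_i/|C_i^*|\le(1+\e)\OPT$ (using $\tilde n_i\le|C_i^*|$), and bounding the resulting serving cost via the triangle inequality $d(j,f_{i^*(j)})\le d(j,f^*_{i^*(j)})+d(f^*_{i^*(j)},l_{i^*(j)})+d(l_{i^*(j)},f_{i^*(j)})$ together with a careful consolidation of the leader and intersection slacks yields the claimed $(2+O(\e))\OPT$, giving $(2+\e)$ after rescaling $\e$.

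The main obstacle is keeping the triangle-inequality blowup at $2$ rather than $3$: the leader slack $\sum_i|C_i^*|\,d(f_i^*,l_i)$ and the intersection slack $\sum_i|C_i^*|\,d(l_i,f_i)$ each contribute roughly $\OPT$ to the serving cost, so a naive bound gives $3\,\OPT$. The refined argument exploits that the witness used for the matroid-intersection bound is itself the pair $(l_i,f_i^*)$, so the two slacks share the same $\sum_i|C_i^*|\,d(l_i,f_i^*)$ structure and can be telescoped to a single copy of $\OPT$. This factor of $2$ is inherent to the triangle-inequality reduction and is why the present approach cannot match the sharper $(1+2/e)$-bound of Theorem~\ref{thm:main-alg}: that proof relies on a random-sampling refinement that is incompatible with general matroid constraints.
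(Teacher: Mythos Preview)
There is a genuine gap in your analysis. Your triangle-inequality bound reads
\[
\cost(C,F)\ \le\ \OPT\ +\ \underbrace{\textstyle\sum_i |C_i^*|\,d(f_i^*,l_i)}_{A}\ +\ \underbrace{\textstyle\sum_i |C_i^*|\,d(l_i,f_i)}_{B},
\]
and the matroid-intersection witness $f_i=f_i^*$ does give $B\le(1+O(\e))A$. But this only yields $\cost(C,F)\le \OPT+(2+O(\e))A$, and your leader hypothesis gives merely $A\le(1+\e)\,\OPT$ --- hence a $(3+O(\e))$-approximation, exactly the ``naive'' bound you yourself flag. The ``telescoping'' you describe does not collapse $A+B$ to a single copy of $\OPT$: the two slacks are each bounded by the \emph{same} quantity $A$, but they \emph{add} rather than cancel. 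I do not see a way to push a linear min-weight matroid-intersection objective below the factor~$3$ here; the surrogate $\sum_i \tilde n_i\,d(l_i,f_i)$ is too coarse a proxy for the actual serving cost, and once you route every client through its leader you have already committed to two additive slacks.

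The paper obtains the factor~$2$ by a different mechanism. It keeps the FPT enumeration of leaders (together with radii $R_i$, in place of your cluster sizes), but then introduces \emph{fictitious facilities} $F'$ with $\cost(C,F')\le(3+O(\e))\,\OPT$ and maximizes the monotone submodular function $\impr(S)=\cost(C,F')-\cost(C,F'\cup S)$ subject to \emph{two} matroid constraints simultaneously: the partition matroid ``one facility per $F_i$'' and the input matroid $\mathcal{M}$. The $\tfrac{1}{2+\e}$-approximation of~\cite{LSV10} for submodular maximization over the intersection of two matroids then yields
\[
\cost(C,S^\star)\ \le\ \tfrac{1}{2}\cdot(3+O(\e))\,\OPT\ +\ \bigl(1-\tfrac{1}{2}\bigr)\,\OPT\ =\ (2+O(\e))\,\OPT.
\]
The key point is that this argument \emph{interpolates} between the fictitious baseline $3\,\OPT$ and the true optimum via the submodular approximation ratio, rather than stacking two triangle-inequality losses on top of $\OPT$; that interpolation is precisely what a direct min-weight matroid-intersection formulation cannot provide.
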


Since the \matmed problem is a generalization of the \kmed problem, the
$(1+2/e -\eps)$-hardness from Theorem~\ref{thm:hard} translates
immediately to \matmed.  It remains an open problem to close the gap
between this lower bound and the $(2+\e)$-approximation in
Theorem~\ref{thm:matroid-median}. 
We can also use our ideas to get an $(3-\frac{2}{p+1} + \e)$ for the
$p$-\matmed problem. 

\medskip\textbf{\fl.} \fl is a problem closely related to \kmed, 
where each facility has an opening cost and the goal is to open facilities to minimize the sum of distances from clients to their closest facilities plus the sum of the total opening costs. For this problem, the best known hardness ratio is $\flnumber \approx 1.463$~\cite{GK98},
which is defined to be $\max_{x \geq 0} \big( 1 + \frac{x}{1 + x} \ln \frac 2x \big)$. 
On the other hand, the best algorithm achieves an $1.488$-approximation~\cite{Li13}.
When the parameter $k$ denotes the number of facilities open in the optimal solution,  we prove that our techniques also give an FPT algorithm for \fl whose approximation ratio matches the hardness ratio of~\cite{GK98}. 

\begin{theorem}[Algorithm for \fl]
  \label{thm:fl}
  There is a $(\flnumber+\e)$-approximation algorithm for the \fl  problem,
  that runs in time FPT time, i.e., in $f(k,\e)n^{O(1)}$ time.
\end{theorem}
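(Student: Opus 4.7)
The plan is to adapt the FPT framework developed earlier in the paper for \kmed to the \fl problem, exploiting that the parameter $k$ bounds the number of facilities opened by the optimal solution $\OPT$. The high-level strategy has three stages: coreset reduction, FPT guessing of the structure of $\OPT$, and a bifactor-style rounding whose trade-off curve matches precisely the Guha--Khuller function $\flnumber = \max_{x \geq 0}\bigl(1 + \tfrac{x}{1+x}\ln\tfrac{2}{x}\bigr)$.

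First, I would apply the (weighted) coreset construction used for \kmed to cut the client set down to size $\poly(k/\eps)$, preserving the cost of every feasible solution up to a $(1\pm\eps)$ factor. Since opening costs depend only on the facility set $F$, the same sensitivity-sampling argument extends to \fl once we include the opening-cost term as a fixed additive contribution. Next, I would guess, in $f(k,\eps)$ time, a small signature of each of the $k$ optimal clusters: concretely, $O(1/\eps)$ random representative clients per cluster, together with an approximate opening-cost bucket (one of $O(\log_{1+\eps}(n))$ geometric levels) for the corresponding optimal facility. Because the coreset has $\poly(k/\eps)$ points, the total number of guesses is $f(k,\eps) n^{O(1)}$. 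Conditioned on a correct guess, each cluster's representatives pin down a near-optimal facility (as in the \kmed argument), and we know the approximate opening cost that $\OPT$ is willing to pay for it.

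Given the correct signature, the remaining task is to choose, for each cluster, one facility among the candidates that are consistent with the guessed opening-cost bucket. I would formulate this as a restricted LP in which cluster-to-facility assignments are constrained by the guess, and then apply a bifactor LP-rounding in the spirit of Byrka and of Jain--Mahdian--Markakis--Saberi--Vazirani. For a parameter $\gamma \geq 1$ the rounding yields a $(\gamma, 1+\tfrac{2}{e^{\gamma-1}})$-bifactor on the (opening, connection) parts; after coreset and FPT guessing, the relevant opening-cost masses are already almost tight to $\OPT$'s, so optimizing $\gamma$ over this curve gives a combined ratio of exactly $\flnumber+\eps$, matching the hardness inherited from Theorem~\ref{thm:hard}.

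The main obstacle is Step~3: showing that, after the FPT guessing has essentially eliminated the ``Set Cover barrier'' that drives the $\flnumber$ hardness, the residual LP is structured enough that the standard bifactor curve $(\gamma, 1+2e^{-(\gamma-1)})$ is actually achievable cluster-by-cluster without loss. Concretely, one must argue that the per-cluster opening-cost guess lets us charge the facility payment of the rounding to $\OPT$ with factor at most $\gamma$, while simultaneously controlling the connection cost using the signature representatives. This is the step where the FPT parameterization by $k$ must be leveraged most carefully, since any naive rounding only gives the classical $1.488$ ratio rather than the tight $\flnumber$ bound. All other pieces (coreset, signature enumeration, representative-based facility selection) are essentially identical to the \kmed argument and pose no new difficulties.
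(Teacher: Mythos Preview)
Your proposal diverges from the paper's argument and, more importantly, leaves open precisely the step that carries all the weight. The paper does \emph{not} introduce a new LP or invoke Byrka/JMS-style bifactor rounding. Instead it reuses the \kmed machinery verbatim: guess one leader $\ell_i$ and one discretized radius $R_i$ per optimal facility, build the candidate sets $F_i$ and the fictitious facilities $f'_i$, and work with the same monotone submodular improvement function $\impr(\cdot)$. The only new ingredient is a bicriteria submodular maximization (Lemma in \S\ref{sec:fl}): first run the \cite{CCPV11} algorithm to get a rank-$k$ base $S_1$ with $\impr(S_1)\ge(1-1/e)\impr(F^\star)$, then greedily add $(\gamma-1)k$ more elements to the residual function. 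A two-line telescoping shows the resulting set $S$ of size $\gamma k$ satisfies $\impr(S)\ge(1-e^{-\gamma})\impr(F^\star)$. Plugging this into the same chain as Lemma~\ref{lem:main-lemma} yields connection cost $(1+2/e^{\gamma})\cost^\star$ and opening cost $\gamma\,\open^\star$ (using uniform opening costs, or guessed buckets otherwise). Optimizing $\gamma$ over this curve gives exactly $\flnumber$.

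By contrast, your Step~3 is where the proof would have to happen, and you explicitly flag it as unresolved. The difficulty you name is real: standard bifactor LP rounding compares to the LP value, not to $\OPT$, and nothing in your guessing scheme (``$O(1/\eps)$ random representatives per cluster'' plus opening-cost buckets) obviously tightens the LP enough to recover the $(\gamma,\,1+2/e^{\gamma})$ curve against $\OPT$. Note also that your stated curve $(\gamma,\,1+2e^{-(\gamma-1)})$ is off by a factor of $e$ in the exponent; with that curve the optimized ratio strictly exceeds $\flnumber$. The paper sidesteps all of this because the fictitious-facility bound $\cost(C,F')\le(3+O(\eps))\cost^\star$ and the submodular structure are already established from the \kmed section, so the only new content is the elementary bicriteria lemma above.
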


\elnote{Hardness carried over? How to say it?}

\textbf{Roadmap:} In Section~\ref{sec:approx}, we describe the
approximation algorithms for these problems. We assume throughout that
the aspect ratio is polynomially bounded. (We show in
Section~\ref{sec:aspect-ratio} that this assumption is without loss of
generality, in the case we consider where the clients have unit weights.)
In Section~\ref{sec:hardness}, we then give the hardness results for FPT
\mkc, \kmed, and \kmeans.

\subsection{Our Techniques}

The algorithm is inspired by the hardness result from~\cite{GK98}: it
relies on the result of Feige~\cite{Feige98} that \mkc is hard to
approximate better than $(1-1/e)$. Hence, if we build a ``factor graph''
with sets on one side and elements on another, with edges indicating
inclusion, picking $k$ sets covers $(1-1/e)$ elements at distance $1$,
and the remaining at distance at least $3$---hence $1+2/e$. Now what if
we have a  general instance, with different distances? We show how to do
limited enumeration (in FPT) time to restrict our choices to picking one
facility each from $k$ disjoint sets. Moreover, via a surprisingly clean
idea we can model the objective as submodular maximization (subject to a
partition matroid constraint). And this problem can be approximated
well: the factor again is $(1-1/e)$, hence giving the same factor upto
additive $\e$ terms!

The matching hardness result is via showing an FPT hardness for \mkc
assuming the Gap-ETH. Firstly, we show that assuming the Gap-ETH, there
is no FPT approximation algorithm for \lc problem parameterized by the
number of vertices $k$ on one side of the bipartition. (Trying all
labelings on one side takes time $O(n^{k+O(1)})$, and doing much better
is hard.)  To do this, we construct a {\em variable-clause game} from a
$3$-SAT instance, merge {\em clause vertices} into $\ell$
super-vertices, and then use $r$ rounds of parallel repetition. (The
number of clause vertices becomes $k := \ell^r$.) Then we compose this
with the classical reduction from \lc to \mkc~\cite{Feige98}. Due to
some technical details (e.g., our \lc instance is not guaranteed to be
regular) and for the sake of completeness, we provide a formal proof in
Lemma~\ref{lem:mkc}. While our techniques are similar to recent FPT
hardnesses for the related \kdominatingset problem~\cite{CCKLMNT17,
  CLM18}, some technical details (e.g., the {\em projection property} of
\lc instances) prevent us from directly using prior results to get $(1-1/e+\eps)$-hardness for \mkc.

\subsection{Related Work}
\label{sec:related-work}

We briefly survey the state-of-the-art for \kmed and \kmeans; please see
references below for more historical context. For general metric spaces,
the best approximation ratio for \kmed is 2.611~\cite{byrka14} by Byrka
et al., building on work of Li and Svensson~\cite{LS16}. 
Kanungo et
al.~\cite{KanungoMNPSW04} gave a $(9+\e)$-approximation algorithm for
\kmeans in general metric spaces, which was later improved to 6.357 by
Ahmadian et al.~\cite{AhmadianNSW17}. The first constant factor
approximation algorithm for \matmed was given by Krishnaswamy et
al.~\cite{Krishnaswamy0NS15}, which was improved by Swamy~\cite{Swamy16} to~8.

For Euclidean spaces, the problems are better approximable, at least
when either $k$ or the dimension $d$ are fixed; we restrict this
discussion to parameterizing by $k$. Specifically, PTASs for both \kmed
and \kmeans with running time $f(k, \e) \poly(n,d)$ were given by Kumar
et al.~\cite{KumarSS10}. The running times were improved by
Chen~\cite{Chen2006} to $O(nkd+d^2 n^\sigma 2^{(k/\e)^O(1)})$ for any
$\sigma > 0$ for \kmed, and by Feldman et al.~\cite{feldman2013} to
$O(nkd+d \poly(k/\e) + 2^{{\tilde O} (k/\e)})$ for \kmeans. Both these
latter results were based on the notion of coresets. The \kmeans problem
is APX hard even in Euclidean space, if both $k$ and $d$ are allowed to
be arbitrary~\cite{AwasthiCKS15, LeeSW17}.

A result of direct interest to this work is that of Czumaj and
Sohler~\cite{CzumajS10}, for the \emph{min-sum clustering problem}. They
give a $(4+\e)$-approximation on general metrics in FPT time.  They
construct a small (strong) core-set for the related \textsc{Balanced
  $k$-Median} problem, and enumerate over all choices of centers inside
this core-set. We show in \S\ref{sec:bipartite} that their approach
extends to give a $2$-approximation for the \emph{non-bipartite case} of
\kmed --- in this special case of \kmed a facility may be opened at any
client location, and hence $C \sse \F$. Theorem~\ref{thm:main-alg} above
shows how to get a better guarantee for a more general case. (As an
aside, the hardness for this special non-bipartite case is only
$(1+1/e)$; closing this gap is another interesting open question.)

Hardness-of-approximation results for parameterized problems have been
actively studied recently. Lin~\cite{Lin15} proved $W[1]$-hardness of
approximation for \kbiclique. Chen and Lin~\cite{ChenL16} proved
$W[1]$-hardness of approximation for \kdominatingset in any constant
factor, which was later improved to any function $f(k)$
in~\cite{CCKLMNT17, CLM18}. Chalermsook et al.~\cite{CCKLMNT17} also
proved that there is no FPT $o(k)$-approximation algorithm for \kclique
assuming the Gap-ETH.


\subsection{Preliminaries}
\label{sec:preliminaries}

An instance $\I$ of the \kmed problem is defined
by a tuple $((V,d), C, \F, k)$, where $(V,d)$ is a metric space over a
set of points $V$ with $d(i,j)$ denoting the distance between two points
$i,j$ in $V$. Further, $C$ and $\F$ are subsets of $V$ and are referred
as ``clients'' and ``facility locations'', and $k$ is a positive
parameter.  The goal is to find a subset $F$ of $k$ facilities in $\F$
to minimize
$$\cost(C, F) := \sum_{j \in C} d(j,F). $$ In the
\emph{weighted} version of \kmed, every client $j \in C$ has an
associated weight $w_j$, and the goal is to find a subset $F$ of $\F$ of
size $k$ such that $\cost(C, F) := \sum_{j \in C} w_j d(j,F) $ is
minimized.

The \kmeans problem is defined similarly except that the
objective function gets modified to
$\cost(C, F) := \sum_{j \in C} d(j,F)^2$ (and analogously for the
weighted version). The names of the two problems come from the fact that
if the metric space is the real line and $k=1$, the optimal solution is
the median and the mean respectively.  In the \matmed problem, we are
given a matroid on the set $\F$, and the set of open facilities must be
an independent set in the matroid. Again, the goal is to minimize the
assignment cost of clients to the nearest open facility.

In the \fl problem, an instance is not given $k$, but additionally has 
$\open : \F \to \R^+$ that 
indicates the opening cost of each facility. The goal is to find a subset 
$F \subseteq \F$ (without any restriction on $|F|$) that minimizes
$\open(F) + \cost(C, F)$ where $\open(F) := \sum_{f \in F} \open(f)$. 

Finally, the \emph{aspect ratio} of a metric space $(V,d)$ is
$\Delta := \frac{\max_{x,y \in V} d(x,y)}{\min_{x,y \in V} d(x,y)}$.

\section{The Approximation Algorithm}
\label{sec:approx}

We now give the $(1+2/e+ \eps)$-approximation algorithm for \kmed, where
$\eps > 0$ is a fixed parameter throughout this section. The running
time of the algorithm is $f(k, \eps) \cdot \poly(n)$, where
$f(k, \eps) = O(\e^{-2} k \log k)^k$. We then indicate the alterations
to get algorithms for \kmeans and \matmed.

\subsection{The Intuition}
\label{sec:intuition}

We focus on \kmed for now; the ideas for the other problems are
analogous. The first idea is to reduce the size of the client set $C$ to
\CoresetSize---this can be done by results on core-sets for \kmed, which
consolidate the clients into a small number of distinct
locations~\cite{Chen,FL}. The consolidated clients now have weights, but
this extension to weighted \kmed does not pose a problem.

The next idea is to carefully enumerate over the structure of an optimal
solution. Consider an optimal solution
$F^\star =\{f^\star_1, \ldots, f^\star_k\}$. For a facility
$f^\star_i \in F^\star$, let ``cluster'' $C^\star_i$ be the clients
assigned to $f^\star_i$, i.e., the subset of clients $C$ for which
$f^\star_i$ is closest open facility. Let $\ell_i$ be the client in
$C^\star_i$ closest to $f^\star_i$ -- we call it the {\em leader} of
cluster $C^\star_i$. Let $R_i$ be the distance $d(f^\star, \ell_i)$,
suitably discretized. Our algorithm guesses the leaders $\ell_i$ and the
distances $R_i$ for each $i \in [k]$. Since the size of $C$ is
\CoresetSize, there are $(O(\e^{-2}k \log n))^k$ choices for
leaders,\footnote{Our analysis will tighten this bound to
  $O(\e^{-2}\log n)^k$, but this improvement can be ignored for this
  intuition section.} and a similar number of choices for the distances;
moreover, this quantity can be shown to be $f(k, \eps) \cdot n^{O(1)}$.

Assume now that we have correctly guessed the leaders $\ell_i$ and
distances $R_i$. For each leader $\ell_i$, let $F_i$ be the facilities
at distance about $R_i$ from $\ell_i$---this set $F_i$ contains
$f^\star_i$. By making copies, assume the sets $F_i$ are disjoint. Now
our task is to select one facility from each set $F_i$ such that the
total (weighted) assignment cost of the clients in $C$ is minimized. As
such, this seems like a decreasing \emph{supermodular minimization}
problem with a (partition) matroid constraint. (Observe that choosing an
arbitrary center in each $F_i$ gives us a $3$-approximation in FPT time,
but we want to do much better.)

The last idea is to convert this into a monotone submodular maximization
problem, again with a partition matroid constraint. For each set $F_i$,
we add a \emph{fictitious facility} $f_i'$ such that (i)~the assignment cost of
clients to the fictitious facilities is at most $3 OPT$, and (ii)~for a
subset $S$ of facilities, the ``improvement''
$\cost(C, F') - \cost(C, F' \cup S)$, where $F'$ is the set of fictitious
facilities, is a monotone submodular function. We finally show that a
$(1-1/e)$-approximation for this submodular maximization problem gives
the desired approximation guarantee.  The next two sections describe the
algorithm for \kmed in detail. The extension to \kmeans, \matmed and \fl then
appears in \S\ref{sec:extensions}.

\subsection{Client Reduction via Coresets}
\label{sec:coresets}

Consider an instance $\I = ((V,d),C,\F,k)$ of the \kmed problem. Let
$\e > 0$ be a fixed constant. We now define the notion of core-sets and
use known results to reduce the size of $C$ to (a weighted) a set of
size \CoresetSize.

\begin{definition}[Core-set]
  A \emph{(strong) core-set} for $\I$
  is a set of clients $C' \sse V$ along with weights $w_j$ for all
  $j \in C'$, such that
  \[ \sum_{j \in C'} w_j \; d(j, F)
    \in (1-\e, 1+\e) \cdot \sum_{j \in C}  d(j, F), \]
  for every $F \sse \F$ with $|F| = k$.
\end{definition}
A similar definition holds for a strong core-set for the \kmeans
problem. Since we deal only with strong core-sets in this paper, we drop
the modifier and refer to them only as core-sets. The first core-sets
for metric \kmed were given by Chen~\cite{Chen}; the following result is the best current
construction:
\begin{theorem}[\cite{FL}, Theorem~15.4]
  For $0 \leq \e, \delta \leq 1/2$, there exists a Monte Carlo algorithm
  that for each instance $I$ of \kmed on a general metric, outputs a
  core-set $C' \sse C$ with size
  \[ |C'| = O\Big(\frac{k \log n + \log \nicefrac1\delta}{\e^2}\Big) \]
  with probability $1-\delta$, where $n = |V|$. 
  Moreover, the algorithm runs in time
  $O(k(n+k) + \log^2(1/\delta) \log^2 n)$.  For \kmeans, the core-set is
  of size $|C'| = O\big(\frac{k \log n + \log
      \nicefrac1\delta}{\e^4}\big)$, and 
  the runtime remains the same.
\end{theorem}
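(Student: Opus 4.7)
The plan is to use the sensitivity-sampling framework of Feldman--Langberg, which reduces coreset construction to (i) a cheap bicriteria approximation for bounding sensitivities and (ii) a uniform-convergence argument over the family of candidate solutions. First I would compute, in polynomial time, a constant-factor (or bicriteria) approximation $\tilde F \subseteq \F$ to the \kmed instance; any such algorithm (e.g.\ a local-search or LP-rounding based one) gives some $\tilde F$ with $\cost(C,\tilde F) \le O(1)\cdot \OPT$. Let $c_j := d(j,\tilde F)$ be the assignment cost of client $j$, and let $\bar c := \tfrac{1}{|C|}\sum_j c_j$.

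Next, I would define an upper bound $s_j$ on the \emph{sensitivity} of client $j$, where the sensitivity is $\sup_{|F|=k}\, d(j,F)/\cost(C,F)$. Using the triangle inequality, one shows that for any $k$-facility set $F$, picking a closest center $f_F(j)\in F$ and comparing $d(j,F)$ with $d(j,\tilde F)+d(\tilde F,F)$ yields an upper bound of the form $s_j = O\!\left(\frac{c_j}{\OPT} + \frac{1}{|\tilde C(j)|}\right)$, where $\tilde C(j)$ is the cluster of $j$ in $\tilde F$. The total sensitivity $\sum_j s_j$ is then $O(k)$ in the \kmed case. Then I would sample $m$ clients i.i.d.\ with probability $p_j \propto s_j$, giving the sampled client $j$ weight $w_j = \tfrac{1}{m\,p_j}$; the resulting weighted estimator $\hat C$ is unbiased for $\cost(C,F)$ for every fixed $F$.

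The heart of the argument is showing that $\hat C$ concentrates \emph{simultaneously} for all $F$ with $|F|=k$. I would invoke a uniform-convergence bound in terms of the pseudo-dimension of the function class $\{F \mapsto d(j,F)\}_{j\in C}$ restricted to $k$-element $F$'s. In a general metric this pseudo-dimension is $O(k\log n)$, since each $F$ induces at most $\binom{|\F|}{k}\le n^k$ distinct partitions of $C$ into closest-facility cells. Plugging this into the standard $\e^{-2}(\text{dim}+\log(1/\delta))$ sample complexity gives $m = O(\e^{-2}(k\log n + \log(1/\delta)))$, and a union bound over the relevant events yields a $(1\pm\e)$-approximation for every $k$-facility set with probability $1-\delta$.

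The main obstacle is the uniform-convergence step: one must argue that the class of cost functions $F\mapsto \sum_j w_j d(j,F)$ has bounded pseudo-dimension and that the variance of the sensitivity-weighted estimator is controlled, which requires carefully separating the ``heavy'' clients (those with large $c_j/\OPT$) from the ``light'' ones and bounding each contribution. For \kmeans one repeats the whole argument with squared distances; the triangle inequality yields an extra $\e^{-1}$ factor in the sensitivity bound and another $\e^{-1}$ in the variance, giving the $\e^{-4}$ dependence. The runtime $O(k(n+k)+\log^2(1/\delta)\log^2 n)$ is then achieved by implementing the bicriteria step via a streaming / $D^2$-sampling procedure rather than a full LP-based algorithm.
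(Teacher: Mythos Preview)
This theorem is not proved in the paper at all: it is stated as a black-box citation of Theorem~15.4 from~\cite{FL} (Feldman--Langberg), and the paper only \emph{uses} it in \S\ref{sec:coresets} to reduce the client set before running Algorithm~\ref{alg:main}. So there is no ``paper's own proof'' to compare your proposal against.

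That said, your outline is a reasonable high-level sketch of the Feldman--Langberg sensitivity-sampling framework, which is indeed how~\cite{FL} obtains the bound: a cheap bicriteria solution to upper-bound per-point sensitivities, total sensitivity $O(k)$, and then importance sampling combined with a pseudo-/VC-dimension uniform-convergence argument over the class of $k$-center solutions. Two minor caveats if you intend to flesh this out: (i) the exact running time $O(k(n+k)+\log^2(1/\delta)\log^2 n)$ in~\cite{FL} comes from a specific fast bicriteria step and careful bookkeeping rather than a generic LP or local-search approximation, so your ``streaming / $D^2$-sampling'' remark is in the right spirit but would need to be made precise; and (ii) the $\eps^{-4}$ for \kmeans arises in~\cite{FL} from a somewhat more delicate analysis than just ``two extra $\eps^{-1}$ factors,'' so that step would also need care. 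But none of this is something the present paper addresses---for the purposes of this paper, the theorem is simply quoted.
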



The power of core-sets lies in the following fact.
\begin{fact}\label{fact:coreset}
  Consider a \kmed/\kmeans instance $\I=((V,d),C,\F,k)$,
  and let $C'$ be a (strong) core-set with
  weights $w$. Consider the weighted instance $\I'=((V,d),C',\F,k,w)$,
  which is the instance $\I$ with its clients replaced by the weighted
  clients in the core-set. Then, for any $\be\ge1$, a $\be$-approximate solution $F\s \F$ to $\I'$ is a $\be(1+O(\e))$-approximate solution to $\I$. 
\end{fact}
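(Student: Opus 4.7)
The plan is to chain the coreset sandwich inequality in both directions, comparing the cost of the given $\beta$-approximate solution to the cost of an optimal solution for $\I$ on both the original and weighted instances. Let $F^\star$ denote an optimal solution for $\I$, with cost $\OPT := \sum_{j \in C} d(j, F^\star)$, and let $\widetilde{\OPT}$ denote the optimal cost of the weighted instance $\I'$. Let $F \sse \F$ be the $\beta$-approximate solution to $\I'$, so $\sum_{j \in C'} w_j\, d(j, F) \le \beta \cdot \widetilde{\OPT}$.

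First I would use the coreset property applied to $F^\star$ (upper direction) to bound the optimal of $\I'$ by the optimal of $\I$:
\begin{equation*}
\widetilde{\OPT} \;\le\; \sum_{j \in C'} w_j\, d(j, F^\star) \;\le\; (1+\e)\,\OPT.
\end{equation*}
The first inequality is by optimality of $\widetilde{\OPT}$ on $\I'$, and the second is the upper side of the coreset guarantee applied to the specific set $F^\star$.

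Next I would use the coreset property on the candidate solution $F$ (lower direction) to transfer its weighted cost back to an unweighted cost on $\I$:
\begin{equation*}
(1-\e) \sum_{j \in C} d(j, F) \;\le\; \sum_{j \in C'} w_j\, d(j, F) \;\le\; \beta \cdot \widetilde{\OPT} \;\le\; \beta(1+\e)\,\OPT.
\end{equation*}
Dividing through by $1-\e$ and using $\frac{1+\e}{1-\e} = 1 + O(\e)$ for $\e \le 1/2$ gives $\sum_{j \in C} d(j, F) \le \beta(1 + O(\e))\,\OPT$, which is exactly the claimed $\beta(1+O(\e))$-approximation guarantee for $\I$. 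The \kmeans case is identical with $d(j,F)$ replaced by $d(j,F)^2$ throughout, since the coreset definition carries over term-by-term.

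There is no real obstacle here: the argument is just a two-sided application of the coreset sandwich inequality, first to the (fixed) solution $F^\star$ and then to $F$. The only thing to be careful about is to apply the coreset guarantee to specific sets of $k$ facilities (namely $F^\star$ and $F$), rather than trying to swap the quantifier order, and to absorb the constant $(1+\e)/(1-\e)$ into the $(1+O(\e))$ factor.
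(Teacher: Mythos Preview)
Your proof is correct and is exactly the standard two-sided application of the coreset guarantee; the paper itself states this as a fact without proof, so there is nothing to compare against beyond noting that your argument is the intended one.
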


Therefore,
in order to find a $(1+2/e+O(\e))$-approximation to a \kmed $\I$, it
suffices to find a $(1+2/e+O(\e))$-approximation to $\I'$, and
analogously for \kmeans. Henceforth, we restrict our attention to the
core-set instance $\I'$. In other words, we assume that our instances
have only a small number of clients, but now the clients have associated
weights. In the following sections, we show how to 
approximate such weighted \kmed/\kmeans instances in FPT
time.


\subsection{Reduction to Submodular Maximization}
\label{sec:reduction}

Given Fact~\ref{fact:coreset}, we only consider instances
$\I = ((V,d),C,\F,k,w)$ of weighted \kmed, where clients in $C$ have
weights in the range $[1,n]$ and $|C|$ is bounded by \CoresetSize. In
this section we prove the following approximation guarantee for \kmed;
this, combined with Fact~\ref{fact:coreset}, proves the \kmed statement
in Theorem~\ref{thm:main-alg}.

\begin{theorem}\label{thm:main-O}
Let $\e$ be a fixed parameter. Given a \kmed instance $\I=((V,d),C',\F,k,w)$ with $|C'|=O(\e^{-2}k\log n)$, there is a $(1+2/e+O(\e))$-approximation algorithm that runs in $f(k,\e)n^{O(1)}$ time.
\end{theorem}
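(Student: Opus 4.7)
I will implement the three-step blueprint from Section~\ref{sec:intuition}: enumerate a structural ``skeleton'' of the unknown optimum, reduce the remaining facility-selection problem to monotone submodular maximization over a partition matroid, and invoke the standard $(1-1/e)$-approximation. Let $F^\star=\{f_1^\star,\dots,f_k^\star\}$ be an optimum with clusters $C_1^\star,\dots,C_k^\star$. For each $i$ let $\ell_i\in C_i^\star$ be the client nearest $f_i^\star$ within its cluster and set $R_i:=d(\ell_i,f_i^\star)$. The algorithm guesses every $k$-tuple $(\ell_i,R_i)$, with $\ell_i\in C'$ and $R_i$ a power of $(1+\e)$ in $[d_{\min},d_{\max}]$: because $|C'|=O(\e^{-2}k\log n)$ and the aspect ratio is polynomial, the enumeration has $f(k,\e)\cdot n^{O(1)}$ branches. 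For the correct guess, $f_i^\star\in F_i:=\{f\in\F:d(f,\ell_i)\le(1+\e)R_i\}$, and after taking disjoint copies of the $F_i$, the constraint ``pick one facility per $F_i$'' is a partition matroid.

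To turn cost-minimization into submodular-maximization, define a fictitious cost and an improvement function
\[ u_j \;:=\; \min_{i\in[k]}\bigl(d(j,\ell_i)+R_i\bigr), \qquad g(S)\;:=\;\sum_{j\in C'}w_j\max\bigl(0,\,u_j-d(j,S)\bigr), \]
and write $U:=\sum_j w_j u_j$, so that $U-g(S)=\sum_j w_j\min(d(j,S),u_j)$. Since $\max(0,u_j-d(j,S))=\max_{s\in S}\max(0,u_j-d(j,s))$, each per-client term is monotone and submodular in $S$, hence so is $g$. I then run the continuous-greedy $(1-1/e)$-approximation for monotone submodular maximization subject to a partition matroid to obtain $S$ with $g(S)\ge(1-1/e)g(F^\star)$.

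Two triangle-inequality observations drive the analysis. Writing $\sigma(j)$ for the cluster of $j$, the ``leader is closest in its cluster'' property gives $R_{\sigma(j)}=d(\ell_{\sigma(j)},f^\star_{\sigma(j)})\le d(j,f^\star_{\sigma(j)})$, so $u_j\le d(j,\ell_{\sigma(j)})+R_{\sigma(j)}\le d(j,f^\star_{\sigma(j)})+2R_{\sigma(j)}\le 3\,d(j,F^\star)$ and hence $U\le 3\,\OPT$. Conversely, for every $i$, $d(j,F^\star)\le d(j,f_i^\star)\le d(j,\ell_i)+R_i$, so $u_j\ge d(j,F^\star)$ and $g(F^\star)\ge U-\OPT$. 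Combining,
\[ U-g(S) \;\le\; U/e+(1-1/e)\,\OPT \;\le\; (1+2/e)\,\OPT. \]
Finally, every $s\in F_i$ satisfies $d(j,s)\le d(j,\ell_i)+(1+\e)R_i\le(1+\e)(d(j,\ell_i)+R_i)$; choosing the index $i^\star$ attaining $u_j$ and the facility in $S\cap F_{i^\star}$ gives $d(j,S)\le(1+\e)u_j$, whence $d(j,S)\le(1+\e)\min(d(j,S),u_j)$ pointwise. Summing yields $\cost(C',S)\le(1+\e)(U-g(S))\le(1+\e)(1+2/e)\,\OPT=(1+2/e+O(\e))\,\OPT$.

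The main obstacle is engineering $u_j$ to reconcile three competing requirements simultaneously: (i)~submodularity forces the $\min(d(j,S),u_j)$ truncation shape; (ii)~the bound $U\le 3\,\OPT$ needs both the additive form $d(\cdot,\ell_i)+R_i$ and the leader property $R_i\le d(j,f_i^\star)$; and (iii)~the truncation must inflate the real cost by at most a factor $(1+\e)$, which forces the radii in $F_i$ and $u_j$ to agree up to the rounding slack $(1+\e)$. Once these are aligned, the factor $(1+2/e)$ pops out of $U\le 3\,\OPT$ in one line, and coincides exactly with the \mkc hardness that the enumeration is designed to mimic.
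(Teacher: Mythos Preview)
Your proof is correct and follows essentially the same approach as the paper: enumerate leaders and radii, reduce to monotone submodular maximization over a partition matroid, and use the $(1-1/e)$-approximation together with a $3\,\OPT$ bound on the fictitious baseline. The only technical differences are cosmetic---you define the truncation thresholds $u_j=\min_i(d(j,\ell_i)+R_i)$ directly rather than realizing them as distances to fictitious facilities in an extended metric, and you take $F_i$ to be a ball rather than an annulus---both of which slightly streamline the argument (no metricity check is needed) without changing the analysis.
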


By scaling, assume the minimum distance between points in $V$ is 1, so
the aspect ratio $\Delta$ is the maximum distance between two points in
$V$.  For a positive integer $a$, define
$\lcc a \rcc := (1 + \e)^{\ceil{\log_{(1+\e)} a}}$ as the smallest power
of $(1+\e)$ larger than or equal to $a$.  Here, $\e$ is the same fixed
parameter as the one used in the core-set.

The formal algorithm follows the intuition in \S\ref{sec:intuition} and
is described in Algorithm~\ref{alg:main}; let us step through it now. We
iterate over all possible values $\ell_1, \ldots, \ell_k$ for the
leaders, and $R_1, \ldots, R_k$ for the corresponding distances. The
same vertex could appear several times in the subset
$\{\ell_1, \ldots, \ell_k\}$, and so the latter should be thought of as
a multi-set. 
In Step~\ref{line:fict}, we add $k$ new fictitious facilities: for each
$i$, the new facility $f_i'$ is at distance $2 R_i$ from all the
facilities in $F_i$. The distance to all other points is determined by
triangle inequality in Step~\ref{line:metric}. Claim~\ref{clm:metric}
shows that this forms a valid metric. In Step~\ref{line:impr}, we define
the ``improvement'' function $\impr(S)$ as the reduction in cost due to
adding in the facilities in $S$. Claim~\ref{clm:submod} shows this
function is monotone submodular. This means we can use the
$(1-1/e)$-approximation algorithm~\cite{CCPV11} for monotone submodular
function maximization subject to a matroid constraint to find a set $S$
which contains exactly one facility from each of the sets $F_i$, since
this is a partition matroid constraint. Observe that the function
$\impr(\cdot)$ can be computed efficiently. This completes the
description of the algorithm.

\begin{algorithm}
  \caption{\tt FindCenters}
  \label{alg:main}
  \begin{algorithmic}[1]
    \For{every multi-set $\{\ell_1, \ell_2, \ldots, \ell_k\} \s C$}
    \For{every $R_1, \ldots, R_k$ such that $R_i \in [1, \ldots,
      \lcc \De \rcc]$ and $R_i$ is a power of $(1+\e)$}
    \State $F_i \gets \{ f \in F \mid \lcc d(f, \ell_i)\rcc = R_i \}$
    \State make copies of facilities to ensure that $F_1, \ldots, F_k$
    are disjoint \label{line:disjoint}
    \State Initialize $F'\gets\emptyset$; $F'$ will be the set of fictitious facilities
    \For{$i = 1, \ldots, k$}
    \State add a new fictitious facility $f_i'$ to $F'$ and set
    $d(f_i', f) := 2R_i$ for all $f \in F_i$ \label{line:fict}
    \State define 
    $d(f_i',v) := \min_{f \in F_i} (d(f_i',f) + d(f,v))$ for all other
    points $v$ \label{line:metric}
    \EndFor
    \State define  
    $\impr(S) := \cost(C, F') - \cost(C, S \cup F')$ for every set $S  \sse \F$ \label{line:impr}
    \State find $S \sse \F$ approximately maximizing $\impr(S)$, such
    that $|S \cap F_i| = 1$ \label{line:submodmax}
    \EndFor
    \EndFor
    \State among all
    sets $S$ computed in line~\ref{line:submodmax}, output $S$ for which
    $\cost(C, S)$ is minimized. \label{line:final}
  \end{algorithmic}
\end{algorithm}

To prove correctness of the algorithm, we need to show two things: the
distance function defined on $F' \cup V$ in Step~\ref{line:metric} is a
metric, and the function $\impr$ defined in Step~\ref{line:impr} is
monotone and submodular. We defer the simple proofs to
\S\ref{sec:omitted-proofs}.

\begin{restatable}[Metricity]{claim}{Metric}
  \label{clm:metric}
  Consider the set $F'$ defined during an iteration of the
  algorithm. The distance function defined on $F' \cup V$ is a metric.
\end{restatable}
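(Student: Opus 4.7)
The plan is to realize the augmented distance function $d$ on $V \cup F'$ as the shortest-path metric of a natural weighted graph, making most of the metric axioms automatic, and then check that the algorithm's explicit formula matches this shortest-path distance. I would first define the graph $G$ with vertex set $V \cup F'$, whose edges are (a) every pair $u,v \in V$ weighted by the original $d(u,v)$, and (b) every pair $(f_i', f)$ with $f \in F_i$ weighted by $2R_i$. Because we scaled so that the minimum distance in $V$ is at least $1$ and every $R_i \ge 1$, all edge weights are strictly positive, so the shortest-path distance $d_G$ on $G$ is automatically a metric; symmetry, positivity, and the triangle inequality all come for free.

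The work then reduces to showing that the distance assigned by Steps~\ref{line:fict} and~\ref{line:metric} equals $d_G$. There are three cases to handle: $v \in F_i$ (the direct spoke of weight $2R_i$, immediate); $v \in V \setminus F_i$ (the formula returns $\min_{f \in F_i}(2R_i + d(f,v))$, the cost of the best single-spoke route); and $v = f_j'$ (unfolding the formula twice yields $\min_{f \in F_i,\, f' \in F_j}(2R_i + d(f,f') + 2R_j)$, the best double-spoke route). In each case I must argue that no path in $G$ can beat the stated value. The only way this could fail is by detouring through a third fictitious facility $f_h'$ to save on an original-metric leg.

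The main obstacle, and essentially the only nontrivial step, is ruling out such detours. The key lemma is the diameter bound
\[ d(g,g') \le d(g,\ell_h) + d(\ell_h,g') \le 2R_h \quad\text{for all } g,g'\in F_h, \]
which uses that $\lcc d(g,\ell_h)\rcc = R_h$ implies $d(g,\ell_h)\le R_h$. Hence any sub-path of the form $g \to f_h' \to g'$ with $g,g' \in F_h$ costs $4R_h \ge 2R_h \ge d(g,g')$, so replacing it by the direct edge in $V$ can only shorten the path. Iterating this rewrite, the shortest path from $f_i'$ to $v$ uses at most one spoke of $f_i'$ and, when $v = f_j'$, one spoke of $f_j'$, matching the algorithm's formula exactly. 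All remaining work is bookkeeping, and I expect no surprises.
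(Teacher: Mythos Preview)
Your proposal is correct and takes a genuinely different route from the paper's own proof. The paper argues by induction on $i$: assuming $V \cup \{f_1',\dots,f_{i-1}'\}$ is a metric, it directly verifies the two nontrivial triangle inequalities involving the new point $f_i'$ against arbitrary $u,v$ in the existing space. The key step there is exactly your diameter bound: if $g,h \in F_i$ are the minimizers for $d(u,f_i')$ and $d(f_i',v)$ respectively, then $d(u,f_i')+d(f_i',v) = 4R_i + d(u,g) + d(h,v) \ge d(g,h) + d(u,g) + d(h,v) \ge d(u,v)$, using $d(g,h) \le 2R_i$.

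Your approach instead \emph{identifies} the extended distance with the shortest-path metric of an explicit graph, so the metric axioms come for free and the work shifts to verifying that no path can undercut the algorithm's formula---which you reduce to the same diameter bound via the shortcutting argument. The two proofs are thus equivalent at the level of the key inequality, but differ in packaging: the paper's inductive check is shorter and avoids having to match the formula to $d_G$ case by case, while your graph realization is more conceptual, handles the $f_i'$--$f_j'$ distances uniformly, and makes symmetry and the triangle inequality automatic rather than things to verify. One small caveat worth noting in your write-up: after Step~\ref{line:disjoint} the copied facilities may sit at distance~$0$ from one another, so strictly speaking both arguments yield a pseudometric; this is harmless for the algorithm and the paper is equally informal on this point.
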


\begin{restatable}[Submodularity]{claim}{Submod}
  \label{clm:submod}
  The function $\impr(S)$ defined in Step~\ref{line:impr} is monotone and submodular with $\impr(\emptyset)=0$.
\end{restatable}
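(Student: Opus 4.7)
The plan is to decompose $\impr(S)$ as a nonnegative linear combination (over clients) of elementary functions of $S$, and to show each elementary function is monotone submodular.

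First, I would fix a client $j$ and compute its contribution. Let $D_j := d(j, F')$, which is a constant independent of $S$. The cost contribution of $j$ to $\cost(C, S \cup F')$ is $w_j \cdot \min(D_j,\, d(j,S))$, so the improvement from $j$ is
\[
  w_j \bigl(D_j - \min(D_j,\, d(j,S))\bigr)
  \;=\; w_j \cdot \max\bigl(0,\; D_j - \min_{s \in S} d(j,s)\bigr)
  \;=\; w_j \cdot \max_{s \in S} \phi_j(s),
\]
where I set $\phi_j(s) := \max(0,\, D_j - d(j,s)) \ge 0$ and use the convention $\max_{s \in \emptyset} = 0$ (which matches $\impr(\emptyset)=0$ trivially).

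Second, I would verify that for any nonnegative function $\phi$ on $\F$, the set function $g(S) := \max_{s \in S} \phi(s)$ is monotone submodular. Monotonicity is immediate. For submodularity, note that for $S \sse T$ and $e \notin T$, the marginal gain simplifies to
\[
  g(S \cup \{e\}) - g(S) \;=\; \max\bigl(0,\; \phi(e) - g(S)\bigr)
  \;\ge\; \max\bigl(0,\; \phi(e) - g(T)\bigr)
  \;=\; g(T \cup \{e\}) - g(T),
\]
since $g(S) \le g(T)$. Summing over clients, $\impr(S) = \sum_{j \in C} w_j \max_{s \in S} \phi_j(s)$ is a nonnegative combination of monotone submodular functions and is therefore itself monotone submodular, with $\impr(\emptyset)=0$ by construction.

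There is no real obstacle here; the only conceptual point is recognizing that subtracting $\cost(C, F')$ from $\cost(C, S \cup F')$ truncates each client's contribution at zero, which converts the inner $\min$-over-$S$ into a $\max$-over-$S$ of the nonnegative gains $\phi_j(s)$. Once that rewriting is in place, the rest is the standard argument that a pointwise max of a nonnegative function is monotone submodular, combined with closure under nonnegative linear combinations.
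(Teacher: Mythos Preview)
Your proposal is correct and is essentially the same argument as the paper's: both rely on the identity $x-\min(x,y)=\max(0,x-y)$ applied per client, and both reduce submodularity to the fact that $\max(0,\phi(e)-g(S))$ is decreasing in $S$. The only difference is organizational---you first rewrite $\impr(S)$ as $\sum_j w_j\max_{s\in S}\phi_j(s)$ and then invoke the general lemma that a pointwise max of a nonnegative function is monotone submodular, whereas the paper computes the marginal $\impr(S\cup\{f\})-\impr(S)$ directly for each client and compares it to the corresponding marginal at $T\supseteq S$.
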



Now to bound the runtime.
Since $|C|=\CoresetSize$, there are at most
$\bn{O(\e^{-2}k\log n)+k-1}k=(O(\e^{-2}\log n))^k$ different multi-sets of size $k$
with elements in $C$. In addition, there are $\log_{1+\e}\De$ many
choices for $R_i$ for each $i\in[k]$. Therefore, the number of iterations
in Step~1 of the algorithm can be bounded by
\begin{gather}
  (O(\e^{-2}\log n))^k \cdot (\log_{1+\e}\De)^k \leq \bigg(O\Big(
      \frac{(\log \De)(\logn)}{\e^{2}} \Big)\bigg)^k.
\end{gather}
As argued in \S\ref{sec:aspect-ratio}, since we started with the
unweighted \kmed problem, the aspect ratio $\Delta$ can be assumed to
polynomially bounded in $n$, and so the number of iterations can be
bounded by $(O(\log n/\e^2))^k$, which is at most
$n\cdot (O(\e^{-2} k\log k))^k$. Indeed, in case $k<\f\logn{\log\logn}$,
$(O(\log n/\e^2))^k \leq (O(1/\e^2))^k\cdot \smash{(\log
n)^{\f\logn{\log\logn}}} = (O(1/\e^2))^k\cdot n$. Else $\log n \leq O(k
\log k)$, and hence $(O(\log n/\e^2))^k = (O(k\log k/\e^2))^k$.

The algorithm for submodular maximization subject to a matroid
constraint takes polynomial time, given a value oracle for the
function~\cite[Theorem~1.1]{CCPV11}: in fact it can be sped up for the
case of partition matroid constraints~\cite[\S3.3]{CCPV11}. The value
oracle for $\impr(S)$ can itself be implemented in polynomial time.
Hence each iteration of the algorithm can be run in time polynomial in
$n$.

The submodular maximization algorithm is a randomized Monte-Carlo
algorithm that succeeds with only probability $1 - 1/n^2$, but we can
easily boost the success probability by repetition: by running it
$\tau := \poly(\e^{-1} k \log n)$ times for each input $S$ and returning
the maximum value obtained, we can ensure that  with high probability it succeeds in all the
calls we make.

\subsubsection{Approximation Ratio}

We now argue about the approximation ratio of the algorithm. We fix an
optimal solution to the instance. Let $F^\star=\{f^\star_1,\lds,f^\star_k\}$ be the
centers opened by this solution. Define $C^\star_i$ as the clients
for which the closest open center is $f^\star_i$, i.e.,
$C^\star_i := \{j\in C:d(j,f_i^\star)=d(j,F^\star)\}$. We define the notion of
leaders with respect to this solution.

\begin{definition}[Leader]
For each $i\in[k]$, call a client $j\in C^\star_i$ that minimizes $d(j,f^\star_i)$ over all $j\in C^\star_i$ the \emph{leader} $\el^\star_i$ of center $f^\star_i$. If there are multiple clients $j\in C^\star_i$ achieving the minimum, declare an arbitrary one to be the leader. Note that a client can be the leader of multiple centers $f^\star_i$. The \emph{leaders} w.r.t. the solution $F^\star$ is the multi-set $\{\el^\star_1,\lds,\el^\star_k\}$.
%
For each leader $\el^\star_i$, the \emph{radius} $R^\star_i$ is defined as $\lcc d(\el^\star_i, f^\star_i) \rcc. $
%
%
\end{definition}

Consider the iteration of Algorithm~\ref{alg:main} where
$\ell_1, \ldots, \ell_k$ are equal to
$\ell^\star_1, \ldots, \ell_k^\star$ respectively, and
$R_1, \ldots, R_k$ are equal to $R^\star_1, \ldots, R^\star_k$
respectively. Let $S^\star$ be the set output in
Step~\ref{line:submodmax} of the algorithm. It suffices to show that
$\cost(C, S^\star) \leq (1+ 2/e + \e) \cost(C, F^\star)$. We proceed
to show this in the rest of the section.

As in the algorithm, define 
\[ F_i := \{ f \in F \mid \lcc d(f, \ell^\star_i)\rcc = R^\star_i \}, \]
so that $f^\star_i\in F_i$ for each $i\in[k]$.
(Recall  that the sets $F_i$ are disjoint  by
duplicating facilities.) 
Let $F'=\{f_1', \ldots, f_k'\}$ be the set of fictitious facilities defined in the algorithm. 
%

We are interested in the solutions $S$ that consist of one center from each $F_i$, since one such solution is the desired $F^\star$. More formally, define a solution $S$ to be \emph{valid} if the set $S$ can be listed as $(f_1,\lds,f_k)$ so that $f_i\in F_i$ for each $i\in[k]$. 

\begin{claim}\label{lem:bound-S}
For every valid $S$, $\cost(C, F'\cup S) = \cost(C, S)$.
\end{claim}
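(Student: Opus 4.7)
The plan is to reduce the claim to a pointwise statement: since $F' \cup S$ and $S$ differ only by adding the fictitious facilities, and $\cost(C, \cdot)$ sums $w_j \cdot d(j, \cdot)$ over clients, it suffices to show that for every client $j \in C$, no fictitious facility $f_i'$ is closer to $j$ than the nearest facility in $S$. Equivalently, I would prove that $d(j, f_i') \ge d(j, f_i)$ for every $i \in [k]$, where $f_i$ denotes the unique element of $S \cap F_i$.

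To establish this pointwise inequality, the key observation is that the fictitious facility $f_i'$ was defined to sit at distance exactly $2R_i$ from every $f \in F_i$, so by the definition of $d(f_i', j)$ in line \ref{line:metric},
\[
d(j, f_i') \;=\; \min_{f \in F_i}\bigl(2R_i + d(f, j)\bigr) \;=\; 2R_i + \min_{f \in F_i} d(f, j).
\]
Now I would apply the triangle inequality through the leader $\ell_i^\star$. Since every $f \in F_i$ satisfies $\lcc d(f, \ell_i^\star) \rcc = R_i$, in particular $d(f, \ell_i^\star) \le R_i$, which gives $d(f, j) \ge d(j, \ell_i^\star) - R_i$ for each $f \in F_i$. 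Substituting into the displayed expression,
\[
d(j, f_i') \;\ge\; 2R_i + d(j, \ell_i^\star) - R_i \;=\; R_i + d(j, \ell_i^\star).
\]
On the other hand, since $f_i \in F_i$ we also have $d(\ell_i^\star, f_i) \le R_i$, so
\[
d(j, f_i) \;\le\; d(j, \ell_i^\star) + d(\ell_i^\star, f_i) \;\le\; d(j, \ell_i^\star) + R_i \;\le\; d(j, f_i'),
\]
as desired. Taking the minimum over $i$ yields $d(j, F') \ge d(j, S)$, hence $d(j, F' \cup S) = d(j, S)$; multiplying by $w_j$ and summing over $j \in C$ finishes the claim.

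I do not anticipate a serious obstacle here: the proof is really just a triangle-inequality argument exploiting the fact that $2R_i$ was chosen precisely large enough so that going from $j$ through $f_i'$ to any $f \in F_i$ costs at least as much as going from $j$ directly to $f_i$ via the leader. The only subtle point to be careful about is the $\lcc \cdot \rcc$ rounding convention, which gives $d(f, \ell_i^\star) \le R_i$ (not merely $\le (1+\e)R_i$), so the $2R_i$ weight is already tight enough without needing any $(1+\e)$ slack; this ensures the equality $\cost(C, F' \cup S) = \cost(C, S)$ holds exactly, not just approximately.
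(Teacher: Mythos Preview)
Your proof is correct and follows essentially the same approach as the paper: both arguments reduce to the pointwise inequality $d(j,f_i')\ge d(j,f_i)$ and establish it via the triangle inequality through the leader $\ell_i^\star$, using that every $f\in F_i$ satisfies $d(f,\ell_i^\star)\le R_i$. The paper phrases this as a single chain $d(j,f_i)\le d(j,f)+d(f,\ell_i^\star)+d(\ell_i^\star,f_i)\le d(j,f)+2R_i=d(j,f_i')$ (with $f$ the minimizer in $F_i$), while you split it into an upper bound on $d(j,f_i)$ and a lower bound on $d(j,f_i')$, but the content is identical.
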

\begin{proof}
List the set $S$ as $(f_1,\lds,f_k)$, where $f_i\in F_i$ for each $i\in[k]$. 
Informally, this claim amounts to showing that the fictitious facilities $F'$ do not improve the solution $S$. 
To formalize this idea, fix a client $j\in C$ and a fictitious facility $f'_i$, and let $f\in F_i$ 
be a closest center to $j$ in $F_i$.  Below, we show that in fact, client $j$ is closer to $f_i\in S$ than to $f_i'\in F'$:
\[ d(j,f_i) \stackrel{(\text{$\triangle$ ineq.})}{\le} d(j,f) + d(f,\el^\star_i)+d(\el^\star_i,f_i) \le d(j,f) + R^\star_i + R^\star_i = d(j,f)+d(f,f'_i)=d(j,f'_i).\]
Therefore, we have $d(j,F')\ge d(j,S)$ for all clients $j$, so
\[ \cost(C, F'\cup S) =\sum_{j \in C'} w_j \; d(j, F'\cup S) = \sum_{j \in C'} w_j \; d(j,S) = \cost(C,S),\]
as desired.
\end{proof}

We now bound the cost of the solution which opens facilities at $F'$. 
\begin{claim}\label{lem:bound-OPT}
$\cost(C, F') \le (3+2\e)\,\cost(C, F^\star)$.
\end{claim}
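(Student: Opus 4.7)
The plan is to bound, for each client $j \in C$, its distance to the single fictitious facility corresponding to its cluster in $F^\star$, and then sum up with weights.

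Fix a client $j \in C$, and let $i \in [k]$ be such that $j \in C^\star_i$, so $d(j, F^\star) = d(j, f^\star_i)$. Because $f^\star_i \in F_i$ by construction (since $\lcc d(f^\star_i, \ell^\star_i)\rcc = R^\star_i$), the fictitious facility $f'_i$ satisfies $d(f'_i, f^\star_i) = 2R^\star_i$. By the triangle inequality,
\[
d(j, f'_i) \;\le\; d(j, f^\star_i) + d(f^\star_i, f'_i) \;=\; d(j, f^\star_i) + 2R^\star_i.
\]
Now the key observation: since $\ell^\star_i$ is the leader of cluster $C^\star_i$, it is the client of $C^\star_i$ closest to $f^\star_i$, so in particular $d(\ell^\star_i, f^\star_i) \le d(j, f^\star_i)$. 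Combining this with the rounding bound $R^\star_i \le (1+\e)\, d(\ell^\star_i, f^\star_i)$ from the definition of $\lcc \cdot \rcc$, we get
\[
2R^\star_i \;\le\; 2(1+\e)\, d(\ell^\star_i, f^\star_i) \;\le\; 2(1+\e)\, d(j, f^\star_i).
\]
Plugging this back in yields $d(j, f'_i) \le (3 + 2\e)\, d(j, f^\star_i) = (3+2\e)\, d(j, F^\star)$. Since $d(j, F') \le d(j, f'_i)$, we obtain
\[
\cost(C, F') \;=\; \sum_{j \in C} w_j\, d(j, F') \;\le\; (3+2\e) \sum_{j \in C} w_j\, d(j, F^\star) \;=\; (3+2\e)\, \cost(C, F^\star).
\]

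There is no real obstacle here---the entire argument is triangle inequality plus the defining property of the leader. The only subtlety is to avoid confusing the rounded radius $R^\star_i$ with the true distance $d(\ell^\star_i, f^\star_i)$; the discretization inflates the bound by a factor of $(1+\e)$, which is exactly what produces the $2\e$ in the final $(3+2\e)$.
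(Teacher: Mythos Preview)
Your proof is correct and follows essentially the same route as the paper's: fix a client $j$, use the triangle inequality through $f^\star_i$ to reach $f'_i$, and bound $2R^\star_i$ via the leader property $d(\ell^\star_i,f^\star_i)\le d(j,f^\star_i)$ together with the rounding estimate $R^\star_i\le(1+\e)\,d(\ell^\star_i,f^\star_i)$. The paper's proof is line-for-line the same argument.
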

\begin{proof}
It suffices to show that $d(j,F') \le (3+2\e)\, d(j, F^\star)$ for each client $j\in C$. Fix a client $j\in C$, and let $f^\star_i\in F^\star$ be a center achieving $d(j,f^\star_i)=d(j,F^\star)$. Since $\el^\star_i$ is the leader of center $f^\star_i$, we have
\begin{gather}
d(j,f^\star_i) \ge d(\el_i,f^\star_i) \ge \f{R^\star_i}{1+\e} .\label{eq:claim2.5-lb}
\end{gather}
Recall that $f^\star_i\in F_i$. Therefore,
\begin{align*}
 d(j,F') &\le d(j,f'_i) \stackrel{(\triangle)}{\le}
           d(j,f^\star_i)+d(f^\star_i,f'_i) = d(j,f^\star_i) +
           2R^\star_i \\ &\stackrel{(\ref{eq:claim2.5-lb})}{\le} d(j,f^\star_i)+2(1+\e)\,d(j,f^\star_i)\le (3+2\e)\,d(j,F^\star),
\end{align*}
as desired.
\end{proof}

Let $S^\star$ be the set output in Step~\ref{line:final}. Since the algorithm of~\cite{CCPV11} is  $(1-1/e)$-approximation,  
\begin{gather}
  \impr(S^\star) \geq (1-1/e) \impr(F^\star) \label{eq:11}
\end{gather}

\begin{lemma}
  \label{lem:main-lemma}
The solution $S^\star$ in (\ref{eq:11}) satisfies $\cost(C, S^\star) \le (1+2/e+O(\e))\;\cost(C, F^\star)$.
\end{lemma}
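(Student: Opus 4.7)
The plan is to chase the definitions: unfold $\impr$ on both sides of inequality~(\ref{eq:11}), use Claim~\ref{lem:bound-S} to collapse the $F'$-terms on valid solutions, and then use Claim~\ref{lem:bound-OPT} to bound the remaining $\cost(C, F')$ by $(3+2\eps)\,\cost(C, F^\star)$.

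Concretely, first I would observe that both $F^\star$ and $S^\star$ are valid solutions (in the sense defined before Claim~\ref{lem:bound-S}): $F^\star$ is valid by the choice of the iteration (it contains exactly one $f_i^\star \in F_i$ for each $i$), while $S^\star$ is valid because the submodular-maximization subroutine is run subject to the partition-matroid constraint $|S\cap F_i|=1$ in Step~\ref{line:submodmax}. Therefore Claim~\ref{lem:bound-S} applies to both, giving
\[
\impr(S^\star) = \cost(C,F') - \cost(C,S^\star),\qquad
\impr(F^\star) = \cost(C,F') - \cost(C,F^\star).
\]

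Next I would plug these two identities into~(\ref{eq:11}) and solve for $\cost(C,S^\star)$, obtaining
\[
\cost(C,S^\star) \;\le\; \tfrac{1}{e}\,\cost(C,F') + \bigl(1-\tfrac{1}{e}\bigr)\,\cost(C,F^\star).
\]
Finally, Claim~\ref{lem:bound-OPT} says $\cost(C,F') \le (3+2\eps)\,\cost(C,F^\star)$, so substituting yields
\[
\cost(C,S^\star) \;\le\; \Bigl(\bigl(1-\tfrac{1}{e}\bigr) + \tfrac{3+2\eps}{e}\Bigr)\,\cost(C,F^\star) \;=\; \bigl(1 + \tfrac{2}{e} + O(\eps)\bigr)\,\cost(C,F^\star),
\]
which is exactly the claim. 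No real obstacle is expected here, since all the heavy lifting has already been done: Claim~\ref{clm:submod} guarantees that $\impr$ is monotone submodular so that~\cite{CCPV11} gives the $(1-1/e)$ guarantee, Claim~\ref{lem:bound-S} turns $\impr$ values into true costs on valid sets, and Claim~\ref{lem:bound-OPT} controls the ``base'' cost $\cost(C,F')$. The only thing to be slightly careful about is verifying validity of $S^\star$ so that Claim~\ref{lem:bound-S} applies; this follows from the partition-matroid constraint in Step~\ref{line:submodmax}.
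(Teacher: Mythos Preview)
Your proposal is correct and follows essentially the same route as the paper: apply Claim~\ref{lem:bound-S} to the valid sets $S^\star$ and $F^\star$ to rewrite $\impr$ in terms of true costs, plug into~(\ref{eq:11}), and finish with Claim~\ref{lem:bound-OPT}. If anything, you are slightly more careful than the paper in explicitly checking that $S^\star$ is valid via the partition-matroid constraint in Step~\ref{line:submodmax}.
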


\begin{proof}
We bound the cost associated with this solution as follows. 
\begin{align}
  \cost(C,S^\star) \qquad &
                   \stackrel{\mathclap{(\text{Lem~}\ref{lem:bound-S})}}{=} \quad \cost(F' \cup S^\star) 
  = \cost(C,F') - \impr(S^\star) \notag \\
  & \stackrel{\mathclap{(\ref{eq:11})}}{\leq} \quad \cost(C,F') -
    (1-1/e)\; \impr(F^\star) \notag \\
  &= \quad \cost(C,F') - (1-1/e)\;(\cost(C, F')-\cost(C,F^\star)) \notag \\
  &= \quad (1/e)\;\cost(C,F') + (1-1/e)\;\cost(C,F^\star) \notag \\
  &\stackrel{\mathclap{(\text{Lem~}\ref{lem:bound-OPT})}}{\leq} \quad (3+2\e)(1/e)\;\cost(C,F^\star)+(1-1/e)\;\cost(C,F^\star) \label{eq:mixie}\\
  &= \quad (1+2/e+O(\e)) \;\cost(C,F^\star). \notag
\end{align}
Hence the proof.
\end{proof}

\subsubsection{Putting it all together}

Our algorithm is a Monte Carlo randomized algorithm: both our
subroutines use randomness. The first is the core-set construction in
\S\ref{sec:coresets}, and the second is the submodular maximization
procedure in Step~\ref{line:submodmax} of the algorithm. For each, we
can make the error probability $1/\poly(n)$. Since each iteration of the
algorithm can be implemented in $\poly(n)$ time, the runtime is
dominated by the number of iterations, which is
$(O(\e^{-2} k\log k)^{k} \poly(n))$. Moreover, combining the two steps
of finding the core-set and the submodular maximization,
 the approximation ratio is
$(1+\e)(1+2/e + O(\e)) = 1 + 2/e + O(\e).$ This proves
Theorem~\ref{thm:main-alg} for the \kmed problem.

\section{Gap-ETH Hardness of Max $k$-Coverage}
\label{sec:hardness}

In this section, we show that assuming the Gap Exponential Time
Hypothesis (Gap-ETH)~\cite{Dinur16, MR17}, for any $\eps > 0$, there is
no FPT-approximation algorithm that approximates \mkc better than a
factor $(1 - 1/e + \eps)$.

\begin{restatable}[Hardness for Max-Coverage]{theorem}{kcovHardness}
  \label{thm:mkc-hardness}
  There exists a function $g : \R^+ \to \R^+$ such that assuming the
  Gap-ETH, for any $\eps > 0$, any $(1 - 1/e + \eps)$-approximation
  algorithm for \mkc with $n$ elements and $m$ sets must run in time at
  least $(n + m)^{k^{g(\eps)}}$.
\end{restatable}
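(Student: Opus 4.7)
The plan is to reduce from Gap-ETH in two composed stages: first derive an FPT hardness-of-approximation for \lc parameterized by the number $k$ of vertices on one side, then compose with Feige's classical reduction from \lc to \mkc. Gap-ETH gives us, for some absolute constant $\delta > 0$, a $3$-SAT instance $\phi$ on $N$ variables for which distinguishing satisfiability from $(1-\delta)$-satisfiability requires $2^{\Omega(N)}$ time. From $\phi$ I would build the standard variable-clause two-prover game: the variable-prover receives one of the $N$ variables, the clause-prover one of the $O(N)$ clauses, and the verifier checks that the clause-prover's assignment satisfies the clause and agrees with the variable-prover's on the shared variable. This game inherits a constant value gap from $\phi$.

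To shrink the clause-side question space down to something controlled by $\eps$ alone, I would bundle the $O(N)$ clauses into $\ell$ super-vertices, each of which is asked to return an assignment to every variable appearing in its bundle, and then apply $r$ rounds of parallel repetition to both sides. By Raz's theorem, $r = \Theta(\log(1/\eps)/\delta)$ rounds drive the soundness below $\eps$; the clause-prover's question set grows to size $\ell^r$, so the \lc parameter is $k := \ell^r$, depending only on $\eps$ once $\ell$ is chosen. The resulting bipartite \lc instance has size $N^{O(r)}$, completeness $1$, soundness at most $\eps$, and exactly $k$ vertices on the clause side.

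Composing with Feige's reduction then yields an \mkc instance in which each clause-side vertex contributes a partition system over a block of new elements, and each label of a variable-side vertex induces a set covering the elements witnessing accepting label-pairs. The YES case gives $k$ sets covering the whole universe; in the NO case any $k$ sets cover at most a $(1 - 1/e + \eps)$ fraction, by the standard partition-system analysis applied to the $k$ clause-side blocks. A $(1 - 1/e + \eps)$-approximation for the produced \mkc instance running in time $(n+m)^{k^{g(\eps)}}$ would therefore solve the gap-$3$-SAT instance in time $N^{O(r) \cdot k^{g(\eps)}} = N^{C(\eps)}$ for some constant $C(\eps)$ depending only on $\eps$, contradicting the $2^{\Omega(N)}$ lower bound from Gap-ETH.

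The main obstacle, flagged in the text, is that the bundled variable-clause instance is not right-regular: different bundles may touch different variable sets and so induce projections of different sizes, whereas the textbook Feige reduction assumes a clean projection property with a common label alphabet. I would handle this either by padding the bundles with dummy clauses and variables so that every clause super-vertex has uniform degree and uniform answer length, or by reworking the partition-system gadget to tolerate heterogeneous projections with appropriately weighted elements. A secondary quantitative task is to track the losses through clause-bundling, parallel repetition, and Feige's gadget so that a well-defined function $g(\eps)$, depending only on $\eps$, emerges in the final bound.
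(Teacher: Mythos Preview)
Your high-level plan matches the paper: build a variable-clause game from the $3$-SAT instance, bundle the clause side into $\ell$ super-vertices, apply $r$-fold parallel repetition, and compose with Feige's partition-system reduction. The genuine gap is in your parameter accounting, and it is not a detail you can fix just by ``tracking the losses.''

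When you bundle $O(N)$ clauses into $\ell$ super-vertices, each super-vertex must be labeled with a full assignment to its $\Theta(N/\ell)$ clauses, so $|\Sigma_U| = 2^{\Theta(N/\ell)}$; after $r$ rounds of parallel repetition this becomes $2^{\Theta(Nr/\ell)}$. Feige's reduction creates one set per (left-vertex, label) pair, so the \mkc instance has $|\cals| \approx \ell^r \cdot 2^{\Theta(Nr/\ell)}$ sets. In particular the instance is \emph{not} of size $N^{O(r)}$ as you assert; it is exponential in $N$ whenever $\ell$ is held constant. This is not an accident but the point: you cannot simultaneously make $k$ bounded (a function of $\eps$ only) and keep the instance polynomial in $N$.

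Consequently your final calculation is wrong: a $(1-1/e+\eps)$-approximation running in time $(n+m)^{k^{g(\eps)}}$ would take time $2^{\Theta(Nr/\ell)\cdot k^{g(\eps)}}$, not $N^{C(\eps)}$. To contradict the $2^{\Omega(N)}$ bound from Gap-ETH you need $k^{g(\eps)} = o(\ell)$. With $k = \Theta(\ell^r)$ the paper sets $g(\eps) := 1/(2r)$ so that $k^{g(\eps)} \approx \sqrt{\ell}$, giving time $2^{O(Nr/\sqrt{\ell})}$, and then chooses $\ell$ large enough (depending on $r$ and the Gap-ETH constant $\delta$) to drive this below $2^{\delta N}$. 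So $\ell$ is not fixed once and for all as a function of $\eps$; it is a free parameter set at the very end, and the exponential label alphabet is precisely the mechanism by which the running-time lower bound emerges.

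As a minor point, your diagnosis of the regularity obstacle is slightly off. The bundled-and-repeated instance \emph{is} $U$-regular, and the projection property holds in the usual sense (all projections map a common $\Sigma_U$ to a common $\Sigma_V$). The genuine irregularity is on the $V$-side, since variables appear in different numbers of clauses; the paper handles this by weighting the elements in the partition-system gadget, which is one of the fixes you propose.
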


Using the reduction of Guha and Khuller~\cite{GK98}, this immediately
implies Theorem~\ref{thm:hard}. 
The rest of the section is devoted to the proof of
Theorem~\ref{thm:mkc-hardness}.  The proof has two main components: the
first part shows under the Gap-ETH, it takes at least $n^{h(k)}$ time to
approximate the \lc problem even when one side of the bipartition has
only $k$ vertices; here $h(\cdot)$ is some increasing function depending
on the quality of approximation. This reduction is inspired by the
recent progress on the hardness of parameterized
problems~\cite{CCKLMNT17, CLM18} and was communicated to us by Pasin
Manurangsi.  The second part is the classical reduction from \lc to \mkc
given by Feige~\cite{Feige98}.

\subsection{Hardness of \lc from Gap-ETH}
\label{sec:lc-from-GETH}

We begin with the standard definition of \lc. 
\begin{definition}[Label Cover]
  An instance of \lc $\call$ consists of a bipartite graph
  $G = (U \cup V, E)$ with possibly parallel edges, two label sets
  $\Sigma_U, \Sigma_V$, and a projection $\pi_e : \Sigma_U \to \Sigma_V$
  for each $e \in E$.  Given a labeling
  $\sigma : (U \cup V) \to (\Sigma_U \cup \Sigma_V)$, an edge
  $e = (u, v) \in E$ is satisfied when $\pi_e(\sigma(u)) =
  \sigma(v)$. The goal of \lc is to find a labeling $\sigma$ that
  maximizes the number of satisfied edges. Let $\OPT(\call)$ be the
  maximum fraction of edges simultaneously satisfied by any labeling.
\end{definition}
Note that we include the {\em projection property} in the definition;
all \lc instances in the paper will have this property.  For a vertex
$u \in U \cup V$, let $d_u$ be the degree of $u$, and let $d_U$ (resp.\
$d_V$) be the maximum degree of $U$ (resp.\ $V$).  We also call an
instance \emph{$U$-regular} (resp.\ \emph{$V$-regular}) if all vertices
in $U$ (resp.\ $V$) have the same degree.  All subsequent \lc instances
will be $U$-regular, though the lack of $V$-regularity will require us
to do a little more work in \S\ref{sec:feige}.

Given a 3-SAT formula $\phi$, let $\OPT(\phi)$ be the maximum fraction
of clauses that can be satisfied by any assignment.  The
Gap-ETH~\cite{Dinur16, MR17} states that there exist some constants
$\delta > 0, s < 1$ for which no algorithm, given a 3-SAT formula $\phi$
on $n$ variables and $m = O(n)$ clauses, can distinguish whether
$\OPT(\phi) = 1$ or $\OPT(\phi) < s$ in time $O(2^{\delta n})$.  The
main result of this subsection is the following lemma.

\begin{lemma}
  For every $\ell, r \in \mathbb{N}$, there is a reduction that, given
  3-SAT formula $\phi$ with $n$ variables and $m$ clauses, outputs a
  $U$-regular \lc instance $\call$ such that
  \begin{itemize}
  \item (Completeness) $\OPT(\phi) = 1 \implies \OPT(\call) = 1$, and 
  \item (Soundness) $\OPT(\phi) < s \implies \OPT(\call) < s^{\Omega(r)}$,
  \end{itemize}
  where
  $|U| = \ell^r, |V| = n^r, |\Sigma_{U}| = 2^{O(mr / \ell)},
  |\Sigma_{V}| = 2^{O(r)}, d_V \leq m^r$. The running time of this
  reduction is $m^{O(r)} \cdot |\Sigma_U|$.

  In particular, assuming
  Gap-ETH, for any $\eta > 0$, if we let $r = \Theta(\log (1 / \eta))$ 
  so that 
\[
|\Sigma_U|^{|U|^{O(1 / \log (1 / \eta))}} = 
|\Sigma_U|^{|U|^{1 / 2r}} =
|\Sigma_U|^{\ell^{1/2}} = 
2^{O(mr / \sqrt{\ell})}, 
\]
no
  algorithm can take a \lc instance $\call$ and can decide whether
  $\OPT(\call) = 1$ or $\OPT(\call) < \eta$ in time
  $|\Sigma_U|^{|U|^{O(1 / \log (1 / \eta))}}$.
\label{lem:lc}
\end{lemma}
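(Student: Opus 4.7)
The plan is to combine three ingredients: the standard clause--variable two-prover game for $3$-SAT, a clause-merging step that shrinks $|U|$ to $\ell$ at the cost of inflating $|\Sigma_U|$, and $r$-round parallel repetition of the resulting projection game. The parameter counts then fall out of routine bookkeeping, and the completeness/soundness gap is supplied by any parallel-repetition theorem for projection games. To start, I build the base game $\call_0$: put a vertex $u_C$ for each clause on the $U$-side and $v_x$ for each variable on the $V$-side, with one edge per (clause, variable) incidence; take $\Sigma_{U_0}$ to be the (at most $7$) satisfying assignments of each clause, $\Sigma_{V_0}=\{0,1\}$, and $\pi_{(C,x)}(\alpha)=\alpha(x)$. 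Completeness is immediate. For soundness, a labeling of value $v$ has, by averaging, at least $3v-2$ fraction of clauses all three of whose edges are simultaneously satisfied, and for each such clause the $V$-labels form an assignment satisfying it; so $\OPT(\phi) \ge 3\,\OPT(\call_0)-2$, and $\OPT(\phi) < s$ forces $\OPT(\call_0) < (s+2)/3 =: s_0 < 1$.

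Next, I merge clauses into $\ell$ super-vertices to form $\call_1$. Partition the clauses into groups $G_1,\ldots,G_\ell$ of size $m/\ell$ each (padding with tautologies if $\ell \nmid m$), replace the vertices $\{u_C : C \in G_i\}$ by a single super-vertex $u_i$, and keep every edge (parallel edges are allowed). Set $\Sigma_{u_i} = \{0,1\}^{V_{G_i}}$, so $|\Sigma_{u_i}| \le 2^{3m/\ell}$. Enlarge $\Sigma_V$ by a dead symbol $\bot$, and on the edge arising from clause $C\in G_i$ and variable $x$ declare $\pi_e(\alpha)=\alpha(x)$ if $\alpha$ satisfies $C$ and $\pi_e(\alpha)=\bot$ otherwise (no $V$-label ever equals $\bot$, so such edges are lost). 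Each super-vertex is incident to exactly $3m/\ell$ multi-edges, giving $U$-regularity. Completeness survives by taking $\alpha_i = \sigma^*|_{V_{G_i}}$ from a satisfying assignment $\sigma^*$. For soundness, any $\call_1$-labeling of value $v$ induces a $\call_0$-labeling of value $\ge v$ (set $u_C$'s label to any satisfying assignment of $C$ that agrees with $\alpha_i$, else arbitrary), hence $\OPT(\call_1)\le\OPT(\call_0)<s_0$.

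Finally, let $\call$ be the $r$-fold parallel repetition of $\call_1$. Then $|U|=\ell^r$, $|V|=n^r$, $|\Sigma_U|=2^{O(mr/\ell)}$, $|\Sigma_V|=2^{O(r)}$, and $d_V \le m^r$ since $d_{v_x}\le m$ in $\call_1$. Completeness is preserved tuple-wise. For soundness, invoke any parallel-repetition theorem for projection games (Raz/Holenstein/Rao) to conclude $\OPT(\call)\le\bigl(1-(1-s_0)^{O(1)}\bigr)^{\Omega(r)} \le s^{\Omega(r)}$, absorbing constants. Building $\call$ costs $m^{O(r)}\cdot|\Sigma_U|$ time, dominated by enumerating edges with their projections; the ``in particular'' part is a routine substitution of $r=\Theta(\log(1/\eta))$ into these bounds. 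The two subtleties to handle are (a) formalizing the $\bot$ gadget so that the projection-game parallel-repetition theorem applies verbatim (one may alternatively fold $\bot$ into $\Sigma_V$, losing only a constant factor), and (b) tracking the exponent constant in the repetition bound carefully enough to deliver $s^{\Omega(r)}$ rather than merely $(1-\Omega(1))^{r}$.
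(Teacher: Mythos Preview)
Your proposal is correct and follows essentially the same three-step architecture as the paper: build the clause--variable projection game, merge the $U$-side into $\ell$ super-vertices, then apply $r$-fold parallel repetition.

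The one place where you diverge technically is the choice of alphabet after merging. You take $\Sigma_{u_i}=\{0,1\}^{V_{G_i}}$ (all Boolean assignments to the variables occurring in group $G_i$) and then introduce a $\bot$ symbol to kill edges whose clause is not satisfied by the chosen assignment. The paper instead sets $\Sigma_{U_1}=(\Sigma_U)^{m/\ell}=7^{m/\ell}$, i.e., a tuple consisting of one \emph{satisfying} assignment per clause in the group; the projection on the edge arising from clause $j$ simply reads off the $j$th coordinate and applies the original $\pi_e$. This keeps the projection property clean with no $\bot$ gadget and no need to enlarge $\Sigma_V$, so your subtlety~(a) disappears entirely. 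Both encodings give $|\Sigma_{U_1}|=2^{O(m/\ell)}$, so the parameter bounds are unaffected.

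Your subtlety~(b) is not actually a concern: since $s<1$ is a fixed constant from Gap-ETH, the quantities $s^{\Omega(r)}$ and $(1-\Omega(1))^r$ are both $2^{-\Omega(r)}$, so the standard parallel-repetition bound already delivers what the lemma states.
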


Note that a brute-force algorithm that tries every assignment to $U$ and
chooses the best assignment for $V$ for it runs in $O(|\Sigma_U|^{|U|})$
times a polynomial.  Lemma~\ref{lem:lc} shows that assuming the Gap-ETH,
even approximately solving \lc requires significant time.

Lemma~\ref{lem:lc} is proved by a series of well-known transformations
between \lc instances.  We start with the following basic hardness
result for \lc assuming the Gap-ETH, which follows from essentially
restating Gap-ETH as a {\em clause-variable} game:
\begin{theorem}[Theorem 4.1 of~\cite{CCKLMNT17}]
  There is a reduction that, given 3-SAT formula $\phi$ with $n$
  variables and $m$ clauses, outputs a $U$-regular \lc instance $\call$
  such that
  \begin{itemize}
  \item (Completeness) $\OPT(\phi) = 1 \implies \OPT(\call) = 1$, and 
  \item (Soundness) $\OPT(\phi) < s' \implies \OPT(\call) < s = 1 - (1 - s') / 3$,
  \end{itemize}
  where
  $|U| = m, |V| = n, |\Sigma_{U}| = 7, |\Sigma_{V}| = 2, d_V \leq m$,
  and $G$ is $U$-regular with $d_U = 3$.  In particular, assuming the
  Gap-ETH, there exist constants $\delta > 0$, $s < 1$ such that no
  algorithm can take a \lc instance $\call$ and can decide whether
  $\OPT(\call) = 1$ or $\OPT(\call) < s$ in $O(2^{\delta |U|})$ time.
\label{thm:lc_0}
\end{theorem}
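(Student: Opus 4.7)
The plan is to use the standard clause--variable game. Given a 3-SAT instance $\phi$ with $n$ variables and $m = O(n)$ clauses, I would take $U$ to be the set of clauses (so $|U|=m$) and $V$ to be the set of variables (so $|V|=n$). For each clause $C$ and each of its three variables $x$, include an edge $e = (C,x)$; the resulting bipartite graph has $d_U = 3$ exactly (so $G$ is $U$-regular) and $d_V \leq m$, matching the statement. Set $\Sigma_V := \{0,1\}$ and let $\Sigma_U$ be the set of satisfying assignments to the three variables of $C$, so $|\Sigma_U| \leq 7$. The projection $\pi_e$ on edge $e = (C,x)$ maps a clause label $\sigma(C)$ (a $\{0,1\}$-assignment on $C$'s three variables) to its value at $x$. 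All stated parameter bounds follow immediately.

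Completeness is immediate: given any satisfying assignment $\tau$ of $\phi$, label each variable $x$ by $\tau(x)$ and each clause $C$ by $\tau|_C$, which lies in $\Sigma_U$ because $\tau$ satisfies $C$. Every edge $(C,x)$ is then satisfied, so $\OPT(\call) = 1$.

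For soundness I would argue the contrapositive. Suppose a labeling $\sigma$ satisfies at least an $s$ fraction of the $3m$ edges, so at most $3m(1-s)$ edges are unsatisfied. Define $\tau(x) := \sigma(x)$ for every variable $x$, and call a clause $C$ \emph{bad} if $\tau$ does not satisfy $C$. Since $\sigma(C) \in \Sigma_U$ is, by construction, a satisfying assignment of $C$, whenever $C$ is bad the labels $\sigma(C)$ and $\tau$ must disagree on at least one of $C$'s three variables, and the corresponding edge $(C,x)$ is unsatisfied. Hence the number $B$ of bad clauses obeys $B \leq 3m(1-s)$, so $\tau$ satisfies at least a $1 - 3(1-s)$ fraction of $\phi$. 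Choosing $s := 1 - (1-s')/3$ yields $\OPT(\phi) \geq s'$, contradicting $\OPT(\phi) < s'$; hence $\OPT(\call) < s$.

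For the ``in particular'' statement, Gap-ETH provides constants $\delta_0 > 0$ and $s' < 1$ such that no algorithm distinguishes $\OPT(\phi) = 1$ from $\OPT(\phi) < s'$ in $O(2^{\delta_0 n})$ time. Since the reduction runs in $\poly(n,m)$ time and produces $|U| = m = O(n)$, any algorithm solving the gap \lc problem in time $O(2^{\delta |U|})$ for a sufficiently small $\delta > 0$ would yield a $2^{o(n)}$ algorithm for gap 3-SAT, a contradiction. The argument has no real obstacle; the only mild subtlety is ensuring that each clause has exactly three distinct variables so that $d_U = 3$ is attained precisely, which can be arranged by discarding trivially satisfied clauses and (if necessary) introducing at most one dummy variable per short clause without affecting the gap parameters.
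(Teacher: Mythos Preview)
Your proposal is correct and matches exactly what the paper indicates: the paper does not give its own proof of this theorem but cites it from~\cite{CCKLMNT17} and remarks that it ``follows from essentially restating Gap-ETH as a clause-variable game,'' which is precisely the construction and analysis you carry out. The only point worth noting is that the paper takes the statement as a black box, so your write-up is in fact more detailed than what appears there.
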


Let $\ell$ be a parameter that will be related to $k$ in \mkc later.  We
can ensure $\ell$ divides $|U|$ by taking an arbitrary vertex in $U$ and
making $\ell \lceil |U| / \ell \rceil - |U|$ copies of it.  This does
not change any of the properties in Theorem~\ref{thm:lc_0} except to
increase the soundness $s$ by $o_n(1)$; however, the soundness still
remains bounded away from $1$.

Since we want few vertices on the left, we construct a new \lc instance
$\call_1$ by partitioning $U$ into $\ell$ groups and creating
super-vertices for each one. Formally, index the vertices of $u$ as
$U = \{ u_{i, j} \}_{i \in [\ell], j \in [m / \ell]}$, and let the
$i^{th}$ part be $S_{i} := \{ u_{i, j} \}_{j \in [m / \ell]}$. The new
instance
$\call_1 = ((U_1 \cup V_1, E_1), \Sigma_{U_1}, \Sigma_{V_1}, \{
\pi^1_{e} \}_{e \in E_1})$ is constructed as follows.

\begin{itemize}
\item $V_1 = V$ and $\Sigma_{V_1} = \Sigma_{V}$ (the RHS remains unchanged), 
\item $U_1 = \{ S_1, \dots, S_{\ell} \}$. $\Sigma_{U_1} = (\Sigma_U)^{m /
    \ell}$ (the LHS has one super-vertex for each group), and 
\item for each $e = (u, v) \in E$ such that $u = u_{i, j}$, add an edge
  $e' = (S_i, v)$ to $E_1$ with the projection
  $\pi^1_{e'}(\sigma_1, \dots, \sigma_{m / \ell}) := \pi_e(\sigma_j)$
  where the latter $\pi_e$ denotes the projection in $\call$. (Recall we
  allow parallel edges with different projections.)
\end{itemize}

Since the set of possible labelings and the set of edges remain the same
except for syntactic changes, the completeness $c$ and the soundness $s$
do not change. The parameters become
$|U_1| = \ell, |V_1| = |V|, |\Sigma_{U_1}| = 2^{O(m / \ell)},
|\Sigma_{V_1}| = O(1)$. It still maintains $U$-regularity and
$d_{V_1} = d_V \leq m$.

The final transformation is the powerful {\em parallel repetition} step,
which shows that the soundness decreases exponentially as we take the
natural graph power. Fix $r \in \mathbb{N}$. The instance
$\call_2 = ((U_2 \cup V_2, E_2), \Sigma_{U_2}, \Sigma_{V_2}, \{
\pi^2_{e} \}_{e \in E_2})$ is constructed as follows.
\begin{itemize}
\item $U_2 = (U_1)^r$ and $\Sigma_{U_2} = (\Sigma_{U_1})^r$. 
\item $V_2 = (V_1)^r$ and $\Sigma_{V_2} = (\Sigma_{V_1})^r$. 
\item $E_2 = (E_1)^r$. For each $e = (e_i)_{i \in [r]} \in E_2$ with
  $e_i = (u_i, v_i) \in E_1$ and
  $(\sigma_1, \dots, \sigma_r) \in \Sigma_{U_1}^r$,
  $\pi^2_e(\sigma_1, \dots, \sigma_r) = (\pi^1_{e_1}(\sigma_1), \dots,
  \pi^1_{e_r}(\sigma_r))$.
\end{itemize}
The parameters become
$|U_2| = \ell^r, |V_2| = |V|^r = n^r, |\Sigma_{U_2}| = 2^{O(mr / \ell)},
|\Sigma_{V_2}| = 2^{O(r)}, d_{V_2} \leq d_{V_1}^r \leq m^r$, and
$\call_2$ maintains $U$-regularity. The completeness $c$ still remains
$1$, and by the parallel repetition theorem~\cite{Raz98}, the soundness
drops $s = 2^{- \Theta(r)}$, where the constant hiding in the $\Theta$
depends on the original soundness. This proves Lemma~\ref{lem:lc}.

\subsection{Hardness of \mkc from \lc}
\label{sec:feige}

Given the ``nice'' \lc instance from Lemma~\ref{lem:lc} we now show how
to reduce this to \mkc. This reduction is standard and closely follows
the classical one given by Feige~\cite{Feige98}, modulo some minor
issues arising from it not being $V$-regular.

Recall that an instance of $\cali$ of \mkc consists of an underlying
universe $\calu$, a family $\cals$ of subsets, and an integer $k$. The
goal is to find a subfamily $\cals' \subseteq \cals$ with $|\cals'| = k$
that covers the largest number of elements. For notational simplicity,
we prove the hardness of the {\em weighted version} of \mkc where each
element $e \in \calu$ has weight $w(e)$ and we want to maximize the
total weight of the covered elements.  Note that weighted instances can
be easily converted to unweighted instances by duplicating elements
according to their weights. 
In our reduction, the ratio between the maximum and the minimum weight
will be bounded by the number of elements. 
The proof appears in Section~\ref{appendix-hardness}.

\begin{lemma}[Reduction \#2]
  There exist functions $a : \R^+ \to \mathbb{N}$ and
  $f : \R^+ \to \R^+$ such that for any $\eps > 0$, there exists a
  polynomial-time reduction that takes a \lc instance
  $\call = ((U \cup V, E), \Sigma_U, \Sigma_V, \{ \pi_e \}_{e \in E})$
  that is $U$-regular and has the maximum $V$-degree $d_V$, and produces
  a \mkc instance $\cali = (\calu, \cals, k)$ such that
  \begin{itemize}
  \item (Completeness)
    $\OPT(\call) = 1 \implies \OPT(\cali) = w(\calu)$.
  \item (Soundness)
    $\OPT(\call) < f(\eps) \implies \OPT(\cali) \leq (1 - 1/e + \eps)
    \cdot w(\calu)$.
  \end{itemize}
  The reduction satisfies
  $|\calu| \leq |V| \cdot |d_V|^{a(\eps)} \cdot a(\eps)^{\Sigma_V}$,
  $|\cals| = a(\eps) \cdot |U| \cdot |\Sigma_U|$, and $k = a |U|$.
  \label{lem:mkc}
\end{lemma}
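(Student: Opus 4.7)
The plan is to adapt Feige's classical reduction~\cite{Feige98} from \lc to \mkc, with extra care because our \lc instances are not $V$-regular. The central gadget is a \emph{partition system}: for each $\eps > 0$ we pick an integer $a = a(\eps)$ sufficiently large and, for $|\Sigma_V|$ partitions, construct a ground set $\Pi$ of size $a^{O(|\Sigma_V|)}$ with partitions $\{P_\beta\}_{\beta \in \Sigma_V}$, each into $a$ blocks, having the property that any union of blocks drawn from strictly fewer than $\kappa := (1-\eps/2)\ln a$ distinct partitions covers at most a $(1 - 1/e + \eps)$ fraction of $\Pi$. Such systems exist by a standard probabilistic argument.

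Given $\call$, for every $v \in V$ we take a fresh copy $(\calu_v, \{P_{v,\beta}\}_\beta)$ of this gadget and set $\calu = \bigsqcup_v \calu_v$. To neutralize the lack of $V$-regularity, we weight each element of $\calu_v$ proportionally to $d_V / d_v$ so that every $v$ contributes the same total weight to $\calu$; clearing denominators to turn weights into integer multiplicities inflates $|\calu_v|$ by the $d_V^{a(\eps)}$ factor appearing in the lemma. On the set side, for each $u \in U$ we introduce $a$ copies $u^1,\ldots,u^a$; for each pair $(u^i, \sigma) \in \{u^1,\dots,u^a\} \times \Sigma_U$ we define a set $S_{u^i,\sigma}$ that, for every edge $e=(u,v)\in E$, contains the $i$-th block of $P_{v,\pi_e(\sigma)}$. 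Finally set $k := a|U|$. Completeness is then immediate: given a labeling $\sigma^*$ satisfying every edge, pick $\{S_{u^i,\sigma^*(u)} : u \in U,\, i \in [a]\}$; for each $v$ the projection property forces the selected sets touching $\calu_v$ to all use the single partition $P_{v,\sigma^*(v)}$, contributing its $i$-th block for every $i\in[a]$ and thus covering all of $\calu_v$.

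Soundness is the technical heart and the main obstacle. Assume some family $\cals'$ of $k$ sets covers strictly more than $(1-1/e+\eps)\,w(\calu)$; a standard averaging/normalization step lets us assume $\cals'$ picks exactly $a$ sets $\{S_{u^i,\sigma^i_u}\}_{i\in[a]}$ for each $u \in U$. Call $v$ \emph{good} if $\cals'$ covers more than a $(1-1/e+\eps/2)$ fraction of $\calu_v$; by weighted averaging, good vertices account for an $\Omega(\eps)$-fraction of the total weight. The partition-system guarantee forces the multiset $\{\pi_{(u,v)}(\sigma^i_u) : (u,v)\in E,\, i\in[a]\}$ at every good $v$ to use at most $\kappa$ distinct labels of $\Sigma_V$. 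We then define a random labeling $\tau$: pick $\tau(u)$ uniformly from $\{\sigma^1_u,\ldots,\sigma^a_u\}$ and $\tau(v)$ uniformly from the (at most $\kappa$) distinct projected labels observed at $v$. A random edge incident to a good $v$ is satisfied with probability at least $1/(a\kappa)$; after accounting for the $d_v$-reweighting in the edge distribution, this yields $\OPT(\call) \ge f(\eps) := \Omega(\eps/(a\log a))$, contradicting the soundness hypothesis once $a(\eps)$ is chosen large enough. The stated bounds on $|\cals| = a \cdot |U| \cdot |\Sigma_U|$, $k = a|U|$, and $|\calu| \le |V| \cdot d_V^{a(\eps)} \cdot a(\eps)^{|\Sigma_V|}$ then follow by direct accounting, and the reduction runs in polynomial time.
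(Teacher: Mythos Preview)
Your construction has a fatal flaw: it creates no soundness gap at all. Take \emph{any} labeling $\sigma:U\to\Sigma_U$ (not required to satisfy a single edge) and consider the $k=a|U|$ sets $\{S_{u^i,\sigma(u)} : u\in U,\ i\in[a]\}$. At each $v$, a single neighbor $u$ already contributes, over $i=1,\dots,a$, all $a$ blocks of the partition $P_{v,\pi_{(u,v)}(\sigma(u))}$, and one full partition covers $\calu_v$ entirely. Hence $\OPT(\cali)=w(\calu)$ regardless of $\OPT(\call)$. The underlying reason is that by replicating each $u$ into $a$ copies and assigning copy $u^i$ the $i$-th block, you hand every $u$ the power to cover an entire partition at every neighbor; Feige's argument needs each partition system to receive only about $a$ blocks in total, whereas your $\calu_v$ receives $a\cdot d_v$ blocks. (Your stated partition-system property is also inverted---taking all $a$ blocks of a single partition already violates it---but this is secondary to the structural problem.)

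The paper's construction is genuinely different and avoids this. Rather than one gadget per $v$, it places a hypergrid $[a]^{|\Sigma_V|}$ at each pair $(v,i)$ with $i\in[d_v]^a$; here $i$ records which neighbor of $v$ each of $a$ ``provers'' is sent to. A set $S(j,u,\ell)$ touches hypergrid $(v,i)$ only when $u$ is the $i_j$-th neighbor of $v$, and it contributes just the $j$-th slice along one coordinate. Thus each hypergrid sees on average exactly $a$ slices, independent of $d_v$, making the $(1-1/e)$ covering bound meaningful. The $d_V^{\,a}$ factor in $|\calu|$ comes from this $[d_v]^a$ indexing, not from clearing weight denominators as you suggest; your accounting for that factor does not match the bound either. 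Soundness then proceeds by classifying hypergrids as big/good/pseudorandom and decoding a labeling from the good ones by correlating two random coordinates $j<j'$.
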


We can now finish the proof of Theorem~\ref{thm:mkc-hardness} based on Lemma~\ref{lem:lc} and Lemma~\ref{lem:mkc}.
\begin{proof}[Proof of Theorem~\ref{thm:mkc-hardness}]
  Fix $\eps > 0$ that determines $a(\eps)$ and $f(\eps)$ in
  Lemma~\ref{lem:mkc}. Let $r \in \mathbb{N}$ in Lemma~\ref{lem:lc} so
  that the soundness $2^{-\Omega(r)} \leq f(\eps)$.

  With $\ell$ still being a free parameter, Lemma~\ref{lem:lc} shows a
  reduction from an initial 3-SAT instance $\phi$ with $n$ variables and
  $m = O(n)$ clauses to a \lc instance with
  $|U| = \ell^r, |V| = n^r, |\Sigma_{U}| = 2^{O(mr / \ell)},
  |\Sigma_{V}| = 2^{O(r)}$, and $d_V \leq m^r$.  Lemma~\ref{lem:mkc}
  with this \lc instance produces a \mkc instance with
  \begin{align*}
    |\calu| &\leq |V| \cdot |d_V|^{a} \cdot a^{\Sigma_V} = n^{O(ar)} \cdot a^{2^{O(r)}}  \\
    |\cals| &= a \cdot |U| \cdot |\Sigma_U| = a \ell^r \cdot 2^{O(nr / \ell)} \\
    k &= a |U| = a \ell^r. 
  \end{align*}

  An $(1 - 1/e + \eps)$-approximation algorithm for \mkc that runs in
  time $|\cals|^{k^{1/2r}}$ will distinguish whether $\OPT(\phi) = 1$ or
  $\OPT(\phi) < s'$ for some $s'$ in time
  \[
    2^{O((nr / \ell) \cdot k^{1/2r})} = 
    2^{O((nr / \ell) \cdot \sqrt{\ell})} =
    2^{O(nr / \sqrt{\ell})},
  \]
  which will contradict the Gap-ETH for large enough $\ell$. Observe
  that $|\cals| \gg |\calu|$; if we set $g(\eps) := 1/2r$, we get the
  same implication from an algorithm that runs in time
  $|\calu|^{k^{g(\e)}}$, which proves the theorem.
\end{proof}






\bibliography{references}

\appendix
\section{Omitted Proofs}
\label{sec:omitted-proofs}

\Metric*

\begin{proof}
Since the distances between points in the original metric space do not change, we only need to check triangle inequalities involving fictitious centers.
We prove by induction on $i = 0, 1, \dots, k$ that the distances on $V \cup \{ f'_1, \dots, f'_i \}$ form a metric. The base case $i = 0$ holds since $V$ is metric. 

For general $i$, let $f'_i \in F'$ be a fictitious center and $u, v \in V \cup \{ f'_1, \dots, f'_{i - 1} \}$ be arbitrary points. We consider the following two cases. 
\begin{itemize}
\item For $d(f'_i, v) \leq d(f'_i, u) + d(u, v)$, 
\[
d(f'_i, v) = \min_{f \in F_i} (2R_i + d(f, v)) \leq \min_{f \in F_i} (2R_i + d(f, u) + d(u, v)) = d(f'_i, u) + d(u, v),
\]
where the inequality follows from the triangle inequality between $f, u, v$. 

\item For $d(u, v) \leq d(u, f'_i) + d(f'_i, v)$, first note that 
\[
d(u, f'_i) + d(f'_i, v) 
= \min_{f \in F_i} (2R_i + d(u, f)) + \min_{f \in F_i} (2R_i + d(f, v)).
\]
Let $g$ and $h$ be the facilities achieving the minimum in the first and the second minimization respectively. 
Since $g$ and $h$ are both in $F_i$, $d(g, h) \leq 2R_i$. Therefore, 
\[
d(u, f'_i) + d(f'_i, v)  = 4R_i + d(u, g) + d(h, v) \geq d(g, h) + d(u, g) + d(h, v) \geq d(u, v). 
\]
\end{itemize}
Therefore, all triangle inequalities are satisfied and the new distance on $F' \cup V$ is a metric. 
\end{proof}

\Submod*

\begin{proof}
We have $\impr(\emptyset)=0$ by definition. To show that $\impr(S)$ is monotone, consider two subsets $S\s T\s F$:
\begin{align*}
\cost(F'\cup T)&=\sum_{j\in C'}w_j \; d(j,F'\cup T) \le \sum_{j\in C'}w_j \; d(j,F'\cup S)=\cost(F'\cup S)
\\ &\stackrel{(\text{Step~}\ref{line:impr})}{\implies} \impr(S)\le\impr(T),
\end{align*}
as desired. Finally, to prove that $\impr(S)$ is submodular, consider subsets $S\s T\s F$ and center $f\in F$. For each client $j\in C'$, using the identity $x-\min(x,y)=\max(0,x-y)$ for all real numbers $x$ and $j$, we get
\begin{align*}
d(j,F'\cup S)-d(j,F'\cup (S\cup\{f\}))   &= d(j,F'\cup S)-\min(d(j,F'\cup S),\; d(j,\{f\}))
\\&=  \max(0,\;d(j,F'\cup S)-d(j,\{f\}))
\\&\ge \max(0,\;d(j,F'\cup T)-d(j,\{f\})) 
\\&=  d(j,F'\cup T)-\min(d(j,F'\cup T),\; d(j,\{f\}))
\\&= d(j,F'\cup T)-d(j,F'\cup (T\cup\{f\})),
\end{align*}
Therefore,
\begin{align*}
\impr(S\cup\{f\})-\impr(S) &=\cost(F'\cup S)-\cost(F'\cup (S\cup\{f\})) \\&= \sum_{j\in C'}w_j(d(j,F'\cup S)-d(j,F'\cup(S\cup\{f\})))
\\&\ge \sum_{j\in C'}w_j(d(j,F'\cup T)-d(j,F'\cup(T\cup\{f\})))
\\&=\cost(F'\cup T)-\cost(F'\cup (T\cup\{f\}))
\\&=\impr(T\cup\{f\})-\impr(T),
\end{align*}
proving the desired submodularity.
\end{proof}

\section{Miscellaneous Proofs}

\subsection{Polynomial Aspect Ratio}
\label{sec:aspect-ratio}

Recall that the \emph{aspect ratio} of a metric space $(V,d)$ is
$\Delta := \frac{\max_{x,y \in V} d(x,y)}{\max_{x,y \in V} d(x,y)}$. For
the unweighted version of the problems we consider, we can assume that
$\Delta$ is polynomially-bounded, due to the following standard result.
\begin{proposition}[folklore]
  Given an $\alpha$-approximation algorithm $A$ for (unweighted) \kmed
  on instances with polynomially-bounded aspect ratio that runs in time
  $T$, we can obtain an $(\alpha+ o(1))$-approximation algorithm $B$ for
  (unweighted) \kmed on all instances running in time $T + \poly(n)$.
\end{proposition}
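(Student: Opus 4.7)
The natural strategy is to find the ``scale'' of the optimum via a polynomial-time constant-factor approximation, then truncate the metric both from above and from below so that the aspect ratio becomes polynomial while leaving the optimum essentially unchanged. First, invoke any polynomial-time constant-factor approximation for \kmed (e.g.\ the $\beta = 2.611$-approximation of~\cite{byrka14}) on the input to obtain a solution $S_0$ of cost $L$, so that $\OPT \le L \le \beta\cdot\OPT$; if $L = 0$, output $S_0$ directly.

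Next, set $U := nL$ and $\tau := L/n^3$, and define
\[
d'(i,j) := \max\bigl(\min(d(i,j),\, U),\, \tau\bigr) \quad \text{for } i \neq j, \qquad d'(i,i) := 0.
\]
Both the cap $\min(\cdot,U)$ and the floor $\max(\cdot,\tau)$ preserve the triangle inequality by a short case analysis, so $(V,d')$ is a metric of aspect ratio at most $U/\tau = n^4$, which is polynomially bounded. Now apply algorithm $A$ to the modified instance to get $F_A$ with $\cost_{d'}(C,F_A) \le \alpha\cdot\OPT_{d'}$ in time $T$. Since the per-client cost under the true optimum $F^\star$ is bounded by $\OPT \le L \ll U$, the upper cap is a no-op on $F^\star$ and only the lower truncation contributes, giving
\[
\OPT_{d'} \,\le\, \cost_{d'}(C,F^\star) \,\le\, \cost_d(C,F^\star) + n\tau \,\le\, \OPT + L/n^2 \,=\, (1+o(1))\,\OPT.
\]
Conversely, the total modified cost of $F_A$ is at most $\alpha(1+o(1))L < nL = U$, so no individual client's $d'$-distance to $F_A$ attains the cap. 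Hence for each client $j$ the facility $f \in F_A$ nearest under $d'$ has $d(j,f) < U$, and for any such $f$ we have $d(j,f) \le d'(j,f)$. Summing over clients,
\[
\cost_d(C,F_A) \,\le\, \cost_{d'}(C,F_A) \,\le\, \alpha\,\OPT_{d'} \,\le\, (\alpha+o(1))\,\OPT,
\]
as required; the total running time is $T + \poly(n)$.

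\textbf{Main obstacle.} The delicate step is choosing the upper cap $U$ correctly: too small and a near-optimal $F_A$ might have some client hitting the cap, so its $d$-cost could greatly exceed its $d'$-cost; too large and the aspect ratio is super-polynomial. The choice $U = nL$ threads the needle because any solution with $d'$-cost comparable to $\OPT$ must spread its cost over many clients, so no individual per-client contribution can reach $U$, and the ``untruncation'' step at the end is automatic. (For small $n$, one can fall back on brute-force enumeration of $k$-subsets, absorbing the cost in $\poly(n)$.)
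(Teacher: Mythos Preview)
Your proof is correct and follows essentially the same approach as the paper: estimate the scale of $\OPT$ via a polynomial-time approximation, truncate the metric above and below to force polynomial aspect ratio, run $A$, and argue that no client in the returned solution hits the upper cap. The only cosmetic differences are that the paper uses a $k$-Center approximation (rather than \kmed) to obtain the initial estimate $M$, sets the upper cap to $2\alpha M$ rather than $nL$, and phrases the truncation as ``modify edge weights then take shortest paths'' rather than verifying directly (as you do) that cap-then-floor preserves the triangle inequality.
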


\begin{proof}
  Given an instance $I$ with large aspect ratio, we first compute a
  estimate $M$ for the optimal \kmed cost on $I$---say
  $M/(2n) \leq OPT(I) \leq M$, by using an approximation algorithm for
  the \textsc{$k$-Center} problem that runs in $\poly(n)$ time for
  general instances. View the metric space $(V,d)$ as a complete
  edge-weighted graph. For \emph{long} edges of length more than
  $2\alpha M$, reduce their length to $2\alpha M$, and for \emph{short}
  edges of length less than $M/n^3$, increase their lengths to
  $M/n^3$. Computing all-pairs shortest paths gives a new metric space
  $(V,d')$, and let $I'$ be the corresponding \kmed instance. Use
  algorithm $A$ on this instance $I'$ to get an $\alpha$-approximate
  solution $F \sse V$.

  We claim $F$ is also an $(\alpha+o(1))$-approximate solution for the
  original instance $I$. Firstly, if $F^\star$ is an optimal solution to
  $I$, then its cost in $I'$ is greater by at most $n(M/n^3)$. Indeed,
  since $OPT(I) \leq M$, no client would use the long distances which we
  shortened; the increase in the short distances gives the $n(M/n^3)$
  term. Hence $OPT(I') \leq OPT(I) + n(M/n^3) = OPT(1+o(1))$.  Again
  since $F$ is an $\alpha$-approximation for $I'$, and the long edges
  had reduced length $2\alpha M$, none of the clients in $I'$ will
  connect to it using the shortened long distances. Hence
  \[ \cost_I(C,F) \leq \cost_{I'}(C,F) \leq \alpha\, OPT(I') \leq
    \alpha(1+o(1))\, OPT(I). \]
  This completes the proof.
\end{proof}

\subsection{Bipartite vs. Non-Bipartite Instances}
\label{sec:bipartite}

The \kmed/\kmeans problems we defined have two different sets: clients
$C$ and potential facilities $F$. If $C$ and $\F$ are allowed to be
different subsets of $V$, we call it the \emph{bipartite} version of the
problem. If $C \sse \F$, i.e., we can open facilities at any of the
client locations (and potentially at other locations too), it is the
\emph{non-bipartite case}. We observe that only a $(1+1/e)$-factor
hardness is known for the non-bipartite case, whereas our algorithm
still gives a factor-$(1+2/e)$ approximation for this case.

In fact, for the non-bipartite case, a simple $2(1+\e)^2$-approximation
can be obtained directly using core-sets, using a variant of the arguments of Czumaj and
Sohler~\cite{CzumajS10} as follows. Given an non-bipartite instance $I$, the
algorithm does the following.
\begin{enumerate}
\item Find a core-set $(C',w')$ with $C' \sse C$ and
  $|C'| = O(\poly(\e^{-1} k \log n))$.
\item Enumerate over all subsets $F \sse C'$ being $k$-subsets of $C'$,
  and output the set $F_{alg}$ with smallest cost
  $\sum_{j \in C'} w_j d(j,F)$.
\end{enumerate}
The runtime of this algorithm is easily seen to be in FPT, so we now
show the approximation guarantee. Let $F^\star \sse \F$ be the optimal
solution for instance $I$ with cost $\cost(C,F^\star) = OPT(I)$. By the
strong core-set property,
$\sum_{j \in C'} w_j d(j,F^\star) \leq (1+\e) OPT(I)$. Now, for each
facility $f^\star \in F^\star$ let
$\eta(f^\star) := \{ j \in C' \mid d(j,f^\star) = \min_{f \in F^\star} d(j,f) \}$ be
the closest client among those served by $f^\star$. Observe that
$F' := \{ \eta(f^\star) \mid f^\star \in F^\star\}$ satisfies
$F' \sse C' \sse C \sse \F$, has size $|F'| \leq k$, and ensures that
\begin{gather}
  \sum_{j \in C'} w_j d(j,F') \leq 2\, \sum_{j \in C'} w_j d(j,F^\star) \leq
  2(1+\e) OPT(I). \label{eq:app1}
\end{gather}
The factor of $2$ comes from the fact that
$d(j,\eta(f^\star)) \leq d(j,f^\star) + d(f^\star,\eta(f^\star)) \leq 2d(j,f^\star)$. Now,
since we enumerate over all subsets of $C'$, the cost of the set
$F_{alg}$ is no greater than the LHS of~(\ref{eq:app1}). Again using
the core-set property,
\[ \sum_{j \in C} d(j,F_{alg}) \leq (1+\e)\sum_{j \in C'} w_j d(j,F_{alg})
  \leq (1+\e) \, \sum_{j \in C'} w_j d(j,F') \stackrel{(\ref{eq:app1})}{\leq} 2(1+\e)^2 OPT(I). \]
This completes the proof of the $2(1+\e)^2$ approximation.

Observe that this algorithm crucially uses that $C \sse \F$, so we can
open a facility at the closest client location $\eta(f^\star)$. Hence this
idea does not extend to the bipartite case where $\eta(f^\star)$ may not
belong to $\F$.


\section{Extensions to Related Problems}
\label{sec:extensions}


\subsection{The Algorithm for \kmeans}
The extension to \kmeans is immediate. The first change is in the
definition of cost: the \kmeans cost is
$\cost(C,F) := \sum_j w_j\,  d(j, F)^2$. However, the induced function is
still monotone submodular. Now, by the calculations identical to
Claim~\ref{lem:bound-OPT}, $d(j,F') \leq (3+2\e)d(j,F^\star)$ for client
$j$; hence
\[ \cost(C,F') \leq (3+2\e)^2 \cost(C,F^\star) = (9+O(\e))\, \cost(C,F^\star). \]
Plugging this into~(\ref{eq:mixie}), we immediately get
\[ \cost(C,S^\star) \leq \cost(C,F^\star) \cdot \Big( (9+O(\e))(1/e) +
  (1-1/e) \Big) \leq (1+\nicefrac{8}{e} + O(\e))\,
  \cost(C,F^\star). \] The runtime is $O(\e^{-4} k \log k)^k \poly(n)$,
which is the same barring  a worse dependence on $\e$ because of the larger
core-set. This proves the result for \kmeans.

\subsection{The Algorithm for \matmed} We follow the algorithm for \kmed,
but now we place two matroid constraints: in addition to the partition
matroid constraint we add in the matroid constraint coming from the
\matmed problem itself. Maximizing a monotone submodular function
subject to two matroid constraints has a
$\frac{1}{2+\e}$-approximation~\cite{LSV10}.  Hence, instead
of~(\ref{eq:mixie}), we get
\[ \cost(C,S^\star) \leq
  (3+2\e)\cdot(\nicefrac{1}{(2+\e)})\cdot\cost(C,F^\star) + (1 -
  \nicefrac1{(2+\e)})\cdot\cost(C,F^\star) \leq (2 + O(\e))\,
  \cost(C,F^\star). \] If the rank of the matroid is $k$, then any valid
base of the matroid is also a $k$-subset; hence a core-set for \kmed is
also a core-set for \matmed. This means the rest of the argument remains
unchanged.

\subsection{\fl}
\label{sec:fl}

In this subsection, we prove Theorem~\ref{thm:fl} for \fl. Given an
instance $((V, d), C, \F, \open)$ for $\fl$, let $k$ be the number of
facilities opened in the optimal solution. Our parameter will be this
value $k$. Let $\cost^\star$ and $\open^\star$ be the total connection
and opening cost of the optimal solution respectively.  For sake of
simplicity, we assume that $\open(f)$ is the same for every $f$, but our
idea can be easily generalized when facilities have nonuniform opening
costs (by guessing the opening costs of the optimal
facilities to within a $(1+\e)$-factor). \elnote{Explain more?}
\agnote{OK?} This implies that $\open^\star = k\, \open(f)$.

The general structure of the algorithm resembles the algorithm for
\kmed. We first construct a core-set that preserves the connection cost
of every $F \subseteq \F$ with $|F| \leq 2k$, so that we can assume
$|C| = O(\eps^2 k \log n)$.  The algorithm guesses (a)~the {\em leaders}
$\{ \ell_1, \dots, \ell_k \}$, (b)~the {\em distances $R_1, \dots, R_k$
  from the leaders to their facilities} as in Algorithm~\ref{alg:main},
and then (c)~for each $\ell_i$, compute the set of its possible
facilities $F_i$. These sets $F_i$ give us a partition matroid on the
potential facility locations.

\begin{lemma}
  \label{lem:bicrit}
  Consider a monotone submodular function $f$, subject to a partition
  matroid constraint (with rank $k$). There exists a polynomial-time
  algorithm that, given $\gamma \geq 1$, returns a set $S$ with full
  rank and size $|S| \leq \gamma k$, such that for any $F^\star$ with
  $|F^\star| \leq k$, we have
  \[ f(S) \geq \left( 1 - e^{-\gamma} \right) \cdot f(F^\star) . \]
\end{lemma}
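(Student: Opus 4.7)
The plan is to invoke the continuous-greedy algorithm of Calinescu--Chekuri--Pal--Vondrak, but running it on the original partition-matroid polytope for time $\gamma$ rather than the usual time $1$, and then rounding the resulting fractional solution on a blown-up partition matroid. Let $F:[0,1]^E\to\R_+$ denote the multilinear extension of $f$ and let $P_M=\{x\in[0,1]^E:\sum_{e\in P_i}x_e\le 1\ \forall i\}$ be the partition-matroid polytope, where $P_1,\dots,P_k$ are the parts. Starting from $x(0)=0$, I will evolve $\dot x(t)=y(t)$ with $y(t)\in\arg\max_{y\in P_M}\nabla F(x(t))\cdot y$ for $t\in[0,\gamma]$. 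Because $f$ is monotone, $\nabla F\ge 0$, so each maximizer $y(t)$ is a fractional base of $M$; after integration, $\sum_{e\in P_i}x_e(\gamma)=\gamma$ for every part.

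For the approximation guarantee, without loss of generality I may assume $F^\star$ is a base of $M$ (extend it by monotonicity of $f$), so $\mathbf{1}_{F^\star}\in P_M$, and the standard continuous-greedy differential inequality
\[
\tfrac{d}{dt}F(x(t))=\max_{y\in P_M} y\cdot\nabla F(x(t))
\ \ge\ \mathbf{1}_{F^\star}\cdot\nabla F(x(t))
\ \ge\ f(F^\star)-F(x(t))
\]
integrates from $F(x(0))=0$ to give $F(x(\gamma))\ge(1-e^{-\gamma})f(F^\star)$. I will then apply pipage or swap rounding to $x(\gamma)$ on the polytope of the blown-up partition matroid $M^\gamma$ with capacity $\gamma$ per part; since $\sum_{e\in P_i}x_e(\gamma)=\gamma$, the rounded set $S$ contains exactly $\gamma$ elements per part, so $|S|=\gamma k$, $S$ has full rank in $M$, and $\E[f(S)]\ge F(x(\gamma))\ge(1-e^{-\gamma})f(F^\star)$. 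Boosting via $O(\log n)$ independent trials (and keeping the trial with largest $f$-value) converts the expectation guarantee into a deterministic one.

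The main technical subtlety will be that some coordinate $x_e(t)$ may reach the cap $1$ strictly before $t=\gamma$, in particular when $|P_i|<\gamma$ or when one element of $P_i$ dominates all others' marginals. I will handle this by padding each part with enough zero-marginal dummy elements so that $|P_i|\ge\gamma$ for all $i$; since real elements always have nonnegative marginal by monotonicity, tie-breaking in favor of real elements in the $\arg\max$ ensures that any dummy mass only appears when every real element in that part has zero current marginal, and consequently any dummies rounded into $S$ can be discarded without changing $f(S)$ while leaving at least one real element per part, so $S$ still has full rank in $M$. For non-integer $\gamma$, one rounds on capacity $\lceil\gamma\rceil$ and randomizes the excess to keep $\E[|S|]\le\gamma k$, which is absorbed by the same derandomization step.
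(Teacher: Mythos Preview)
Your approach is sound but takes a genuinely different route from the paper. The paper's argument is more elementary and stays entirely at the level of black-box subroutines: first run CCPV once on the partition matroid to obtain a base $S_1$ with $f(S_1)\ge(1-1/e)f(F^\star)$, then run plain cardinality greedy for $(\gamma-1)k$ more steps on the residual function $f_{S_1}(S):=f(S\cup S_1)-f(S_1)$. The standard greedy bound (picking $(\gamma-1)k$ elements against a $k$-element comparator) gives $f_{S_1}(S_2)\ge(1-e^{-(\gamma-1)})(f(F^\star)-f(S_1))$, and a one-line computation telescopes to $f(S_1\cup S_2)\ge(1-e^{-\gamma})f(F^\star)$. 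Since $S_1$ is already a base, full rank is automatic and $S_2$ need satisfy no matroid constraint whatsoever---so there is no blown-up matroid, no pipage rounding, no capping, and no dummies. Your continuous-greedy-for-time-$\gamma$ route is more unified in that a single procedure does everything, at the price of having to open up the continuous-greedy analysis.

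One gap to flag in your write-up: you correctly identify that coordinates may hit the cap $x_e=1$ before time $\gamma$, but padding with zero-marginal dummies does not by itself resolve this. Padding only guarantees that the per-part mass can reach $\gamma$ so the rounding on $M^\gamma$ is clean; it does nothing to stop a single dominant real element from saturating at $1$. Once you cap coordinates at $1$, the direction $y(t)$ is chosen from the restricted polytope $\{y\in P_M: y_e=0\text{ whenever }x_e(t)=1\}$, and the differential inequality $\tfrac{d}{dt}F(x(t))\ge f(F^\star)-F(x(t))$ must be re-verified on this restricted feasible set. It does hold---take $z=\mathbf 1_{F^\star\setminus\{e:\,x_e(t)=1\}}$ as the witness in the restricted LP and observe $x(t)\vee\mathbf 1_{F^\star}=x(t)\vee z$, so the usual concavity argument goes through---but this is the actual content of the fix and is separate from the padding, which addresses a different (downstream) issue.
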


\begin{proof}
We first use the algorithm from~\cite{CCPV11} to
  find a set $S_1$ of size $k$ such that $S_1$ is a base of the matroid,
  and
  \begin{gather}
    f(S_1) \geq \left( 1 - 1/e \right) \cdot f(F^\star) . \label{eq:1}
  \end{gather}
  Let $f_{S_1}$ be the residual function defined as $f_{S_1}(S) := f(S \cup S_1) - f(S_1)$. 
  Since $f$ is monotone, we get 
  \begin{gather}
    f_{S_1}(F^\star) \geq f(F^\star) - f(S_1) \label{eq:2}
  \end{gather}
  Now we choose a set $S_2$ by picking $(\gamma-1)k$ more elements that
  greedily maximize the residual function. The analysis of the greedy
  algorithm implies that
  \begin{gather}
    f_{S_1}(S_2) \geq (1 - e^{-(\gamma-1)}) f_{S_1}(F^\star) \stackrel{(\ref{eq:2})}{\geq} 
(1 - e^{-(\gamma-1)}) (f(F^\star) - f(S_1)), 
  \end{gather}
  so that the total cost is at least 
  \begin{gather}
f(S_1) + (1 - e^{-(\gamma-1)}) (f(F^\star) - f(S_1))
= (1 - e^{-(\gamma-1)}) f(F^\star) + e^{-(\gamma-1)} f(S_1)
\stackrel{(\ref{eq:1})}{\geq} 
(1 - e^{-\gamma)} f(F^\star), 
  \end{gather}
which completes the proof.
\end{proof}

We use the algorithm from Lemma~\ref{lem:bicrit} to pick a set $S^\star$
of size $\gamma k$, instead of size $k$ as in
Algorithm~\ref{alg:main}. The opening cost of this solution is
$\gamma \open^\star$, since each facility costs
$\open^\star/k$. Moreover, arguing as in Lemma~\ref{lem:main-lemma} (but
using Lemma~\ref{lem:bicrit} instead of the $(1-1/e)$-approximation
guaranteed by algorithm from~\cite{CCPV11}), the connection cost is
$(1 + 2/e^\gamma)\; \cost^\star$. Several cases arise:
\begin{itemize}
\item If $\open^\star \geq 2e^{-1} \cost^\star$: Trying $\gamma = 1$ gives an approximation ratio at most $\frac{1 + 4/e}{1 + 2/e} \leq 1.424$. 
\item If $\eps^2 \cost^\star \leq \open^\star < 2e^{-1} \cost^\star$: Trying $\gamma = \ln (2 / (\open^\star / \cost^\star))$ gives an approximation ratio $\frac{1 + 2\gamma/e^{\gamma} + 2/e^{\gamma}}{1 + 2/e^{\gamma}}$. 
Recalling $\flnumber := \max_{x \geq 0} \big( 1 + \frac{x}{1 + x} \ln \frac 2x \big)$, by setting $x = 2/e^{\gamma}$, we can see that it is upper bounded by exactly $\flnumber \approx 1.463$. 
\item If $\eps^2 \cost^\star > \open^\star$: Trying $\gamma = 1/\eps$ gives the total cost $(1/\eps)\open^\star + (1+2/e^{1/\eps})\cost^\star \leq (1 + 3\eps) \cost^\star$. 
\end{itemize}
Trying every value of $\gamma \in [1, 1/\eps]$ that makes $\gamma k$ an integer
will achieve an approximation ratio of $\flnumber \approx 1.463$.


\section{Reduction from \lc to \mkc}
\label{appendix-hardness}
In this section, we give a reduction from \lc to \mkc, 
proving Lemma~\ref{lem:mkc}. 

\begin{lemma}[Restatement of Lemma~\ref{lem:mkc}]
  There exist functions $a : \R^+ \to \mathbb{N}$ and
  $f : \R^+ \to \R^+$ such that for any $\eps > 0$, there exists a
  polynomial-time reduction that takes a \lc instance
  $\call = ((U \cup V, E), \Sigma_U, \Sigma_V, \{ \pi_e \}_{e \in E})$
  that is $U$-regular and has the maximum $V$-degree $d_V$, and produces
  a \mkc instance $\cali = (\calu, \cals, k)$ such that
  \begin{itemize}
  \item (Completeness)
    $\OPT(\call) = 1 \implies \OPT(\cali) = w(\calu)$.
  \item (Soundness)
    $\OPT(\call) < f(\eps) \implies \OPT(\cali) \leq (1 - 1/e + \eps)
    \cdot w(\calu)$.
  \end{itemize}
  The reduction satisfies
  $|\calu| \leq |V| \cdot |d_V|^{a(\eps)} \cdot a(\eps)^{\Sigma_V}$,
  $|\cals| = a(\eps) \cdot |U| \cdot |\Sigma_U|$, and $k = a |U|$.
\end{lemma}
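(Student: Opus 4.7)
The plan is to implement Feige's classical partition-system reduction from \lc to \mkc, adapted to non-$V$-regular instances by weighting elements. Let $a = a(\eps) \in \mathbb{N}$ be a parameter to be chosen large in $1/\eps$. For each $v \in V$, introduce a local universe $\mathcal{U}_v := [a]^{\Sigma_V}$, regarded as the set of functions $p : \Sigma_V \to [a]$. Each $\sigma_v \in \Sigma_V$ induces a partition $P_{v,\sigma_v} = \{\{p : p(\sigma_v) = i\} : i \in [a]\}$ of $\mathcal{U}_v$ into $a$ equal parts. Let $\calu := \bigsqcup_{v \in V} (\{v\} \times \mathcal{U}_v)$ and weight elements uniformly within each block so that $w(\mathcal{U}_v) = d_v$; thus each edge of $\mathcal{L}$ contributes equal total mass. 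The sets are
\[
S_{u,\sigma_u,i} := \{(v,p) : (u,v) \in E \text{ and } p(\pi_{(u,v)}(\sigma_u)) = i \},
\]
indexed by $(u,\sigma_u,i) \in U \times \Sigma_U \times [a]$, giving $|\cals| = a \cdot |U| \cdot |\Sigma_U|$ and budget $k := a \cdot |U|$, matching the parameters in the statement.

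Completeness is immediate: given a satisfying assignment $\sigma$ for $\mathcal{L}$, select the $a|U| = k$ sets $\{S_{u,\sigma(u),i}\}_{u \in U, i \in [a]}$. Any element $(v,p)$ with a neighbor $u$ is covered by $S_{u,\sigma(u),\, p(\pi_{(u,v)}(\sigma(u)))}$, which lies in our selection and contains $(v,p)$ by construction, so $w(\calu)$ is fully covered.

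For soundness, suppose some $\cals^\star$ of size $k$ covers weight strictly more than $(1 - 1/e + \eps)\, w(\calu)$. For each $u$ set $T_u := \{(\sigma_u,i) : S_{u,\sigma_u,i} \in \cals^\star\}$, so $\sum_u |T_u| = a|U|$. Fix $v$; every set in $\cals^\star$ touching $\{v\}\times \mathcal{U}_v$ selects one part of some partition $P_{v,\sigma_v}$ with $\sigma_v = \pi_{(u,v)}(\sigma_u)$. The heart of the argument is Feige's partition-system threshold lemma: by taking $a$ sufficiently large, if parts are drawn from the partitions $\{P_{v,\sigma_v}\}_{\sigma_v}$, the covered fraction of $\mathcal{U}_v$ exceeds $1 - 1/e + \eps/2$ only when at least a $\delta(\eps)$-fraction of the selected parts come from a single partition index $\sigma_v^\star \in \Sigma_V$. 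Aggregating over $v$ (weighted by $d_v$), a constant-weight subset of vertices $v$ admits such a heavy label $\sigma_v^\star$. We then decode randomly: set $\sigma(v) := \sigma_v^\star$ and draw $\sigma(u)$ uniformly from $\{\sigma_u : \exists\, i,\ (\sigma_u,i) \in T_u\}$. A standard averaging argument using $U$-regularity (so $|T_u|$ is controlled on average) shows the expected fraction of satisfied edges in $\mathcal{L}$ is at least some $f(\eps) > 0$, contradicting $\OPT(\mathcal{L}) < f(\eps)$.

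The main obstacle is the partition-system lemma in Step 3: one must show that beating $1 - 1/e$ strictly in covering $\mathcal{U}_v$ forces concentration on a single label $\sigma_v^\star$. This threshold arises from $(1 - 1/a)^a \to 1/e$, so $a(\eps)$ has to be chosen large enough that the adversary's ``spread'' strategy of taking one part from each partition saturates at $1 - 1/e + \eps/4$. A secondary subtlety is the lack of $V$-regularity, which is handled cleanly by the weights $w(\mathcal{U}_v) = d_v$; this preserves the per-edge bookkeeping that the classical Feige analysis relies on.
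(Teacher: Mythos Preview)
Your construction is missing a crucial ingredient: the paper creates not one partition system per vertex $v$, but one per pair $(v,i)$ with $i \in [d_v]^a$; this is precisely why the universe-size bound in the statement carries the factor $d_V^{a(\eps)}$. The extra index $i$ encodes an ``$a$-prover'' structure: the set $S(j,u,\ell)$ touches the hypergrid $(v,i)$ only when $u$ is the $i_j$-th neighbor of $v$, so for each prover slot $j$ exactly one $u$ can contribute. Consequently the (weighted) average number of sets from $\cals'$ that intersect a hypergrid is exactly $a$, and this is what drives the $1-(1-1/a)^a \to 1-1/e$ threshold in the partition-system argument.

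In your construction each $\mathcal{U}_v$ is touched by every selected set $S_{u,\sigma_u,i}$ with $u \sim v$, so the number of parts landing in $\mathcal{U}_v$ is $N_v = \sum_{u \sim v}|T_u|$. Even in the bi-regular case the average of $N_v$ is $a \cdot d_V$, not $a$. Your ``partition-system threshold lemma'' is then simply false as stated: with $N_v = 2a$ parts chosen from $2a$ pairwise distinct partition indices one already covers roughly a $1 - e^{-2}$ fraction of $\mathcal{U}_v$, with no concentration whatsoever on a single label $\sigma_v^\star$. Thus an adversary can beat $1-1/e$ on typical $\mathcal{U}_v$ without ever creating the label agreement you need to decode. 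The fix is exactly the multi-prover indexing the paper uses; once the per-hypergrid budget is controlled to average $a$, the big/good/pseudorandom split and randomized decoding you sketch are essentially the paper's argument.
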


\begin{proof}
  The high-level idea of the proof is the following: we choose some
  value $a = a(\e)$. Now the set of elements consists of many disjoint
  hypergrids with sides of size $a$ and with $|\Sigma_V|$ dimensions.
  Indeed, there is a copy of the hypergrid $[a]^{|\Sigma_V|}$ associated
  with each $(v, i)$ pair for $v \in V, i \in [d_v]^a$---one for each
  right vertex and an $a$-subset of its left neighbors.

  Now the sets: they are associated with each $u \in U$ and $j \in [a]$
  and each potential label $\ell \in \Sigma_U$.  The sets associated
  with a pair $(j, u)$ have a nonempty intersection with the hypergrid
  of $(v, i)$ if and only if $u$ is the $i_j$'th neighbor of
  $v$. Indeed, the set for $(j,u,\ell)$ contains the entire $j^{th}$
  ``slice'' of each of these hypergrids, along the $\ell^{th}$
  dimension.  The idea is very clean: if there is a ``good'' labeling
  for $\call$, then all these slices will be chosen in a coordinated way
  along the same dimension, and we will cover all the hypergrids
  completely. If there are no good labelings for $\call$, then these
  slices will be chosen in an uncoordinated way along different
  dimensions, and then we will end up covering only a constant factor of
  the hypergrids. (As intuition, if $a = 2$ and we did not manage to
  pick two slices of the hypercube along the same dimension, we cover
  only $\nicefrac34$ of the cube: the hypergrids allow us to get  $1-1/e$.)

  For those familiar with the exposition from~\cite{Feige98}, we are
  considering an $a$-prover system where the verifier first randomly
  chooses a {\em variable question} $v \in V$ and each of $a$ provers
  gets a {\em clause question} independently sampled from $v$'s
  neighbors.


\noindent{\bf Formal construction.}
  For each $v \in V$, fix an arbitrary ordering of its
  incident edges so that the $d_v$ edges incident on $v$ are represented
  as
  $e_{v, 1} = (\alpha_{v, 1}, v), \dots, e_{v, d_v} = (\alpha_{v, d_v},
  v)$.  Let $a = a(\eps)$ be an integer that will be fixed later, and
  consider the hypergrid $[a]^{|\Sigma_V|}$. Let
  $C_{j, \ell} := \{ (x_1, \dots, x_{|\Sigma_V|} : x_\ell = j \}$ be
  the {\em $j^{th}$ slice in the $\ell^{th}$ coordinate.}
We can describe
  our set system as follows.
  \begin{align*}
    \calu &:= \{ (v, i, x) \mid v \in V, i \in [d_v]^a, x \in
            [a]^{|\Sigma_V|} \} && \mbox{ with weight } w(v, i, x) = 1/ ((d_v)^{a - 1} \cdot |E|), \\
    \cals &:= \{ S(j, u, \ell) \mid j \in [a], u \in U, \ell \in
            \Sigma_U \} && \mbox{ where } S(j, u, \ell) = \{ (v, i)
                           \times C_{j, \pi_{(u, v)} (\ell)} \mid
                           \alpha_{v, i_j} = u, i \in [d_v]^a \}, \\
    k &:= a \cdot |U|.
  \end{align*}

\noindent{\bf Completeness.}
  Suppose the labeling
  $\sigma : (U \cup V) \to (\Sigma_U \cup \Sigma_V)$ satisfies every
  edge of $\call$, then the $k = a |U|$ subsets
  \[ \{ S(j, u, \sigma(u)) \mid j \in [a], u \in U \}\] covers every
  element in $U$; indeed, the element $(v, i, x)$ is covered by the set
  $S(x_{\sigma(v)}, \alpha_{v, i_{x_{\sigma(v)}}}, \sigma(\alpha_{v,
    i_{x_{\sigma(v)}}}))$.  This proves the first claim of the theorem
  that we have perfect completeness.

\noindent{\bf Soundness.}
  For sake of a contradiction assume there
  exists a subfamily $\cals' \subseteq \cals$ such that $|\cals'| = k = a|U|$
  and $\cals'$ covers elements of total weight at least
  $(1 -1/e + \eps)$. Recall that each hypergrid is indexed by a
  $(v, i)$. To simplify notation, we identify a pair $(v, i)$ and its
  hypergrid.  We also define $(v, i)$'s weight
  $w(v, i) := 1 / ((d_v)^{a - 1} |E|)$, which is the weight of each of
  the elements in that hypergrid.  The sum of all hypergrids' weights is
  $\sum_{v \in V} (d_v)^a w(v, i) = \sum_v d_v / |E| = 1$, and let
  $\cald$ be the distribution of $(v, i)$'s according to their weights.
  For the rest of this section, an ``average hypergrid'' refers to a
  random $(v, i)$ sampled from $\cald$, possibly conditioned on
  $(v, i) \subseteq X$ for some subset $X$.

  Recall that each set $S \in \cals$ intersects with one hypergrid in
  exactly one slice $C_{j, \ell}$ or is disjoint from it.  Since the
  \lc instance $\call$ is $U$-regular, the sum of the weights of the
  hypergrids that intersect $S(j, u, \ell) \in \cals'$ is
  \[
    \sum_{v \sim u} \quad \sum_{i \in [d_v]^a \, : \, \alpha_{v, i_j} = u} 1 / (|d_v|^{a - 1} |E|) = d_U / |E|,
  \]
  which means that each set intersects with the same weighted number of
  hypergrids.  Let $t_{v, i}$ be the number of sets in $\cals'$ that
  intersect with the hypergrid $(v, i)$.  By double counting,
  \[
    \E_{(v, i) \sim \cald} [t_{v, i}] = 
    \sum_{(v, i)} t_{v, i} / (|d_v|^{a-1} |E|) = |\cals'| \cdot d_U / |E| = a \cdot (d_U |U| / |E|) = a.
  \]
  So each hypergrid intersects with $a$ sets from $\cals'$ in average.

  Call a hypergrid $(v, i)$ {\em big} when $t_{v, i} > 3a / \eps$, and
  call $(v, i)$ {\em good} if it is not big and there exist
  $j < j' \in [a]$ and $\ell_j, \ell_{j'} \in \Sigma_U$ such
  that
  $\pi_{e_{v, i_j}}(\ell_j) = \pi_{e_{v, i_{j'}}}(\ell_{j'})$
  and both $S(j, u_{v, i_j}, \ell_j)$ and
  $S(\ell, u_{v, i_{j'}}, \ell_{j'})$ are in $\cals'$. In other
  words, hypergrid $(v, i)$ is intersected in at least two different
  slices in the same coordinate.  Call the remaining $(v, i)$'s {\em
    pseudorandom}.

  Since the average of $t_{v, i} = a$, the total weight of big
  $(v, i)$'s is at most $\eps / 3$. Hence elements of total weight at
  least $(1 - 1/e + 2\eps / 3) \cdot w(\calu)$ must be covered in the
  good or pseudorandom hypergrids.  The average value of $t_{v, i}$ for
  good and pseudorandom $(v, i)$ hypergrids is still at most $a$.

  We claim that the total weight of good $(v, i)$'s is at least
  $\eps / 3$.  Suppose not. Then elements of total weight at least
  $(1 - 1/e + \eps / 3) \cdot w(\calu)$ are covered in the pseudorandom
  hypergrids.  Note that the average value of $t_{v, i}$ for the
  pseudorandom pairs is at most $(1 + \eps / 3)a$.  For each of those
  hypergrids, since it is not good, the fraction of points covered by
  $a'$ slices is exactly $(1 - (1 - 1/a)^{a'})$, which is monotone and
  concave in $a'$.  Therefore, the fraction of points covered in the
  pseudorandom cubes is at most $(1 - (1 - 1/a)^{(1 + \eps/3)a})$.  Fix
  $a = a(\eps)$ large enough so that this quantity becomes less than
  $(1 - 1/e + \eps/3)$, leading to the desired contradiction.
  Therefore, the total weight of good $(v, i)$'s is at least $\eps / 3$.

  For $j \in [a]$ and $u \in U$, let
  $\Sigma(j, u) := \{ \ell \in \Sigma_U \mid (j, u, \ell) \in \cals' \}$
  be the labels that correspond to $(j,u)$. We now construct a random
  labeling $\sigma$ for $\call$ as follows.
  \begin{itemize}
  \item Randomly sample $j < j' \in [a]$ uniformly from among
    $\binom{a}{2}$ unordered pairs.
  \item For $u \in U$, let $\sigma(u)$ be a random label from
    $\Sigma(j, u)$ chosen uniformly and independently (choose an
    arbitrary label if $\Sigma(j, u) = \emptyset$).
  \item For $v \in V$, uniformly sample $i \in [d_v]^{a}$, and let
    $u := \alpha_{v, i_{j'}}$. Let $\ell$ be a random label from
    $\Sigma(j', u)$ chosen uniformly and independently. Let
    $\sigma(v) = \pi_{u, v}(\ell)$. (Choose an arbitrary label if
    $\Sigma(j', u) = \emptyset$).
  \end{itemize}
  Fix a good pair $(v, i)$. Given that $i$ is sampled in the above
  randomized strategy, with probability at least $1 / \binom{a}{2}$,
  $j < j'$ are sampled such that
  $\pi_{e_{v, i_j}}(\Sigma_{j, \alpha_{v, i_j}}) \cap \pi_{e_{v,
      i_{j'}}}(\Sigma_{j', \alpha_{v, i_{j'}}}) \neq \emptyset$, so
  $\pi_{e_{v, i_j}}(\sigma(\alpha_{v, i_j})) = \sigma(v)$ with
  probability at least
  $(\binom{a}{2} \cdot (3a / \eps)^2)^{-1} \geq \eps^2 / 5 a^3$.

  Fix a vertex $v \in V$ and let $q_v$ be the fraction of
  $i \in [d_v]^{a}$ such that $(v, i)$ is good.  The expected fraction
  of the edges incident on $v$ satisfied by the above randomized
  labeling $\sigma$ is
  \begin{align*}
    & \E_{i, j, j'} \Pr_{u : u \sim v} [\pi_{u, v} ( \sigma(u) ) = \sigma(v)] \\
    =& \, \Pr_{i, j, j'} [\pi_{e_{v, i_j}} ( \sigma(\alpha_{v, i_j}) )= \sigma(v)] \\
    \geq& \, q_v \cdot \eps^2 / 5 a^3,
  \end{align*}
  where the first equality follows from the fact that for fixed $j, j'$,
  over the randomness of $i$, $\alpha_{v, i_j}$ and $\alpha_{v, i_{j'}}$
  are sampled uniformly and independently over the neighbors of $v$, so
  that $u$ in the first line can be replaced by $\alpha_{v, i_j}$ in the
  second line.

  Let $\cald_V$ be the distribution over $v \in V$, which is obtained as
  the marginal distribution of $v$ in $\cald$.  This implies that in
  $\cald_V$, $v$ is sampled with probability $d_v / |E|$, and
  \[
    \E_{v \sim \cald_V} [q_v] = \Pr_{(v, i) \sim \cald} [(v,i) \mbox{ is good}] \geq \eps / 3.
  \]
  Therefore, the total fraction of \lc edges satisfied by the above randomized strategy is at least 
  \[
    \E_{v \sim \cald_V} [q_v \cdot (\eps^2 / 5 a^3)] \geq (\eps / 3) \cdot (\eps^2 / 5 a^3) = (\eps/a)^3 / 15.
  \]
  Let $f(\eps) := (\eps/a)^3 / 15$. This choice establishes that
  $\OPT(\cali) > (1 - 1/e + \eps)\, w(\calu) \implies \OPT(\call) \geq
  f(\eps)$, finishing the proof of the soundness claim.
\end{proof}

\end{document}